\newtheorem{theorem}{\bf Theorem}
\newtheorem{condition}[theorem]{\bf Condition}
\newtheorem{problem}[theorem]{\bf Problem}
\newcommand\ZZ{\mathbb{Z}}
\newcommand\R{\mathcal{R}}
\renewcommand\P{\mathcal{P}}
\newcommand\ep{\epsilon}
\newcommand{\Honek}{H^1_\kappa(\Omega_R)}
\begin{document}

\def\R{\mathcal{R}}
\def\S{\mathcal{S}}
\def\P{\mathcal{P}}
\def\ep{\epsilon}

\begin{center}
{\bf \Large Wave Scattering and Guided Modes}
\\ \vspace{1ex}
{\bf \Large in Periodic Pillars}
\end{center}

\vspace{0.2ex}

\begin{center}
{\scshape \large Hairui Tu}
\end{center}

\begin{center}
{\itshape Department of Mathematics, Louisiana State University\\
Baton Rouge, LA \ 70803}
\end{center}

\vspace{3ex}
\centerline{\parbox{0.9\textwidth}{
{\bf Abstract.}\
We investigate the scattering of scalar harmonic source fields by a periodic
pillar, that is, a spatial structure that is periodic in one dimension and
of finite extent in the other two.
Uniqueness of scattering solutions can be abstracted by guided modes.
Extending results for periodic slabs to pillars, we give conditions
under which ``inverse'' pillars cannot admit guided modes.
In addition, we present a new construction of guided modes at frequency $\omega$
and Bloch wavenumber $\kappa$ with $\omega$ embedded in the continuous spectrum
for each $\kappa$.
These guided modes have period larger than the period of the structure,
and possess a real dispersion relation
$\omega=\omega(\kappa)$, which is atypical of modes at embedded frequencies.
}}

\vspace{3ex}
\noindent
\begin{mbox}
{\bf Key words:}  periodic pillar, wave scattering problem, guided modes, existence
of guided modes, nonexistence, inverse structure.

\end{mbox}

\vspace{3ex}
\hrule
\vspace{2ex}

\section{Introduction}
Plane-wave scattering and guided modes are important problems in the
study of photonic crystals and periodic structures.
Guided scalar modes are Helmholtz fields exponentially trapped within periodic structures
in the absence of source fields originating from the exterior, and this special feature
has lead to many applications of photonic crystals such as waveguides and light filters.
%
%
In the case of a periodic slab finite in one direction
and periodic in the other two directions,
variational techniques such as in \cite{GilbargTrudinger1998}\cite{Jost2002}
have been applied to analyze guided modes.
In \cite{Bonnet-BeStarling1994}, Bonnet-Bendhia and Starling formulate
the scattering problem and guided mode problem for periodic slabs
and prove the existence and nonexistence of guided modes,
including guided modes with frequencies
embedded in the essential spectrum of the corresponding self-adjoint operator.
In \cite{ShipmanVolkov2007}, 
Shipman and Volkov
provide a proof of the nonexistence of guided modes in inverse structures for photonic crystal
slabs, i.e., structures that have higher wave speed in the slabs than in the exterior.
In this paper, we present a systematic mathematical framework for
composite structures  that are periodic in one direction and
finite in the other two.
We call such structures ``periodic pillars''.

In order to derive the variational formulation, one needs to propose
the radiation condition through the Dirichlet-to-Neumann operator.
In \cite{Bonnet-BeStarling1994}, the radiation condition is enforced by mapping the Dirichlet
boundary data of the Fourier expansion 
to the Neumann boundary data.
Our approach employs Bessel functions (see \cite{Watson1944}) to do this, thanks to the fact that,
in the exterior of pillars, general Helmholtz fields
can be expanded as an infinite superposition of Fourier harmonics with Bessel
functions as coefficients.
%
%
We propose the radiation condition for periodic pillars and formulate the scattering
problem through standard variational techniques shown in \cite{Evans1998}\cite{GilbargTrudinger1998}\cite{Jost2002}.
Naturally, we prove the existence of the solutions to the plane-wave scattering problem, and
characterize the frequency estimations of guided modes using the functional-analytic framework
for structures not necessarily piecewise.
%

The main original proof in this paper is the existence of guided modes
with frequency above the cutoff frequency, that is, above the light cone
in the first Brillouin zone.
The existence of guided modes with the frequencies below  cutoff
are well established in \cite{Bonnet-BeStarling1994}, and the pair of frequency and wavenumber
satisfies continuous dispersion relations
and an extension to periodic pillars is straightforward.
The proof of existence of certain embedded guided modes is in general difficult.
In the proof in section 5.2 of \cite{Bonnet-BeStarling1994}, the existence of embedded
guided modes in periodic slabs is proved.
%
However, this type of embedded guided mode can be viewed as non-embedded
because the periodicity of the mathematical construction is chosen to
be larger than the smallest period of the structure and the mode.
%
In fact, any non-embedded guided mode in a periodic structure can be treated
as embedded if one artificially chooses a larger period:
this has the effect of reducing the size of Brillouin zone
in wavenumber space so that the reduced Bloch wavenumber of
the mode now lies above the light cone.
%
%
Thus these guided modes are subject to continuous dispersion relations, as
they are non-embedded from the point of view of their prime period.
In this paper, we prove the existence of truly nontrivial embedded guided modes
in section 4.1.
The embedded guided modes are obtained by designing a wavenumber-dependent subspace
 of fields that is
invariant under the Helmholtz equation and tuning the material parameters so
that the eigenfunction inside the subspace is located in the regime where
all the propagating harmonics automatically vanish in this subspace.
The eigenfunctions in this subspace are automatically guided modes with
frequencies embedded in the continuous spectrum for the given wavenumber.
Our guided modes do not admit a smaller period, yet they persist with perturbations of
the wavenumbers.
The modes admit a continuous dispersion relation above the cutoff frequency.

%
Our proofs of nonexistence of guided modes are based on the ideas in
\cite{Bonnet-BeStarling1994}\cite{ShipmanVolkov2007}.
The first proof is an analogy of the proof in \cite{ShipmanVolkov2007} for piecewise constant inverse structures.
In that study, the proof of the nonexistence relies on a restriction on the
width of the slabs.
The restriction on the geometry of the structures is still needed in our proof
of nonexistence.
We do not know whether this restriction is necessary and leave it as an open problem.
The second nonexistence result is established by using radial monotonicity
of material parameters, in analogy with Theorem 3.5 of \cite{Bonnet-BeStarling1994} that involves
an appropriate Rellich identity.

\section{Media Structure and Scattering Problem}
\subsection{Pillar Structure and Radiation Condition}
%
Consider an infinitely long ``pillar-shaped'' structure whose material parameters are periodic in the $z$-direction
with period $2\pi$, that is,
\begin{equation}
\begin{gathered}
 \epsilon(x,y,z+2\pi)=\epsilon(x,y,z), \quad\quad \mu(x,y,z+2\pi)=\mu(x,y,z), \quad\forall x,y,z. \\
 \end{gathered}
\end{equation}
and bounded in the $x,y$ directions, illustrated by
Figure \ref{fig:pillar}.
\begin{figure}[htt]\label{fig:pillar}
 \centering
    \includegraphics*[viewport=80 230 680 720, scale=0.4]{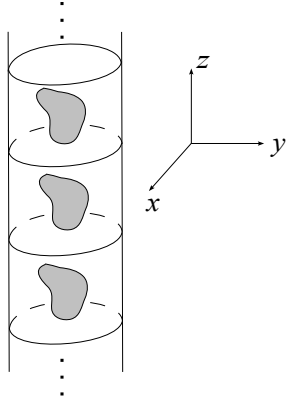}
    \caption{a pillar structure periodic in $z-$direction and bounded in $x,y-$direction}
 \end{figure}
Assume these coefficients to be bounded from below and above by positive numbers,
$\ep_-<\epsilon<\ep_+, \mu_-<\mu<\mu_+$.
We are interested in time-harmonic fields satisfying the scalar wave equation
that are spatially pseudo-periodic in $z$, that is $u(x,y,z)e^{-i\omega t}$,
where $u$ satisfies $u(x,y,z+2\pi)=u(x,y,z)e^{2\pi\kappa i}$
and  $\kappa$ is the \emph{Bloch wavenumber}.
The field $u(x,y,z)$ is governed by the Helmholtz equation
\begin{equation}\label{eqn:Helmholtz}
\nabla\cdot\frac{1}{\mu}\nabla u(x,y,z)+\epsilon\omega^2 u(x,y,z)=0,
\end{equation}
in which $\ep$ and $\mu$ are bounded positive 
and $2\pi$-periodic in $z$.
%

We use  $\Omega=\{(x,y,z):-\pi\le z\le\pi\}$ to denote one period.
Suppose that $\epsilon=\epsilon_0, \mu=\,u_0$ for $r>R$ and we denote the restricted domain
$\Omega_R=\{(x,y,z):-\pi\le z\le \pi, r=\sqrt{x^2+y^2}<R\}$,
whose boundary is $\Gamma_R=\{(x,y,z)\in\Omega: -\pi<x,y<\pi, r=R\}$ plus the upper
and lower disks.\\

The spatial factor of a time-harmonic acoustic or electromagnetic wave is governed by
the Helmholtz equation \eqref{eqn:Helmholtz}.
%
%
By the $\kappa$-pseudo-periodicity,  $u$ can be expanded as an infinite
superposition of Fourier harmonics in $\Omega\setminus\Omega_R$:
\begin{equation}\label{eqn:Expansion}
u(x,y,z)=\sum_{m=-\infty}^\infty\sum_{\ell=-\infty}^\infty R_{m,\ell}(r)e^{i\ell\theta}e^{i(m+\kappa)z},
\end{equation}
where
\begin{equation}
R_{m,\ell}(r)=\begin{cases}
a_{m\ell}H^1_\ell(\eta_m r)+b_{m\ell}H^2_\ell(\eta_m r),   & \mbox{if } \eta_m\neq0, \\
c_{m1}+c_{m2}\ln|r|,  & \mbox{if } \eta_m=0, \ell=0, \\
c_{m\ell1}|r|^\ell+c_{m\ell2}|r|^{-\ell},   & \mbox{if } \eta_m=0, \ell\neq 0.
\end{cases}
\end{equation}
In this expansion, if we assume $\eta_m>0$ when $\eta_m^2>0$ and $i\eta_m>0$ when $\eta_m^2$,
 the Hankel functions $H^1_\ell(\eta_m r)$ are outgoing or exponentially decaying,
depending on whether $\eta_m$ is imaginary or real,
as $r\rightarrow\infty$, and the Hankel functions $H^2_\ell(\eta_m r)$ are incoming or exponentially
growing.

\smallskip
The Sommerfeld radiation condition
$\displaystyle\lim_{r\rightarrow\infty}r^{1/2}(\frac{\partial}{\partial r}-i\kappa)u=0$ is
equivalent to the following conditions, which is required for the study of the
scattering problem by a periodic pillar.
\begin{condition}[Radiation condition]\label{cond:radiation}
A field $u(r,\theta,z)$ satisfies the radiation condition if it admits the following Fourier-Bessel
representation for $r>R$:
\begin{equation}
\begin{split}
u(r,\theta,z)
   &= \sum_{m\in \mathcal{Z}_p\cup\mathcal{Z}_e}\sum_{\ell\in\mathbb{Z}} a_{m\ell} H^1_\ell(\eta_m r) e^{i\ell\theta}e^{i(m+\kappa)z}\\
  & +\sum_{m\in \mathcal{Z}_a}\left[\sum_{\ell>0}c_{ml_2}|r|^{-\ell}e^{i\ell\theta}
   +\sum_{\ell<0}c_{m\ell_1}|r|^\ell e^{i\ell\theta}\right] e^{i(m+\kappa)z}
\end{split}
\end{equation}
where the subsets $\mathcal{Z}_{p,a,e}$ of $\ZZ$ depend on $\kappa$ and are defined by
\begin{eqnarray*}
m\in \mathcal{Z}_p\Leftrightarrow \eta_m^2>0, \eta_m>0 \mbox{ (propagating harmonics)} \\
m\in \mathcal{Z}_a\Leftrightarrow \eta_m^2=0, \eta_m=0 \mbox{ (algebraic harmonics)} \\
m\in \mathcal{Z}_e\Leftrightarrow \eta_m^2<0, -i\eta_m>0 \mbox{ (evanescent harmonics) }.
\end{eqnarray*}
\end{condition}

\subsection{Scattering Problems}
Although this paper deals with guided modes, it would be incomplete without a discussion
of the scattering problem. In fact, the nonexistence of guided modes is equivalent to the
unique solvability of the scattering problem.
Now consider the specific case of a plane wave.
\begin{problem}[Scattering problem, strong form]\label{prob:scattering}
Given $\epsilon_0,\epsilon_1,\mu_0,\mu_1>0$, find $u$ on $\Omega$ such that
\begin{equation}\left\{
\begin{split}
\nabla&\cdot\frac{1}{\mu}\nabla u+\ep\omega^2u=0   \mbox{ in } \Omega,\\
u \mbox{  is}& \mbox{ continuous on }  \partial \Omega,   \\
\frac{1}{\mu}&\frac{\partial u}{\partial n} \mbox{ is continuous on } \partial\Omega, \\
u^{inc}&=\sum_{m\in\mathcal{Z}_p}u_m^{inc}e^{i(\kappa_1x+\kappa_2y+(m+\kappa)z)},\\
u^{sc}&=u-u^{inc}   \mbox{ and its derivatives are }\kappa \mbox{-periodic in }  z,\\
u^{sc}&=u-u^{inc} \mbox{ satisfies the radiation condition }.
\end{split}
\right.
\end{equation}
\end{problem}

An incident plane wave $e^{i(\kappa_1x+\kappa_2y+\kappa_3z)}$
satisfies the Helmholtz equation exterior to the pillar, and thus
its wave vector satisfies $\kappa_1^2+\kappa_2^2=\eta_m^2$.
We take $\kappa$ to lie in the first Brillouin zone $[-1/2,1/2)$.
Noting that the function $e^{\frac{Z}{2}(t-\frac{1}{t})}$ generates
the Bessel functions, i.e.,  $e^{\frac{Z}{2}(t-\frac{1}{t})}=\sum_\ell t^\ell J_\ell(Z)$,
we let $t=e^{i(\theta+\theta_0)}$ to obtain
$e^{iZ\sin(\theta+\theta_0)}=\Sigma_\ell J_\ell(Z)e^{i\ell(\theta+\theta_0)}$.
Then with $\sin\theta_0=\frac{\kappa_1}{\eta_m}, \cos\theta_0=\frac{\kappa_2}{\eta_m}$,
and $Z=\eta_m r$, the incident wave can be written as a superposition
of Hankel functions:
\begin{equation}\label{idty:Incident}
\begin{split}
e^{i(\kappa_1x+\kappa_2y+\kappa_3z)}=&
   e^{i(\eta_m r\cos\theta\sin\theta_0+\eta_m r\sin\theta\cos\theta_0)}e^{i\kappa_3z}\\
  =&e^{i\eta_m r\sin(\theta+\theta_0)}e^{i(m+\kappa)z}\\
  =&\sum_{\ell\in\ZZ} J_\ell(\eta_m r)e^{i\ell(\theta+\theta_0)}e^{i(m+\kappa)z}\\
  =&\sum_{\ell\in\ZZ} \frac{1}{2}\left[H^1_\ell(\eta_m r)+H^2_\ell(\eta_m r)\right]e^{i\ell(\theta+\theta_0)}e^{i(m+\kappa)z}.
\end{split}
\end{equation}
As a result, the scattering problem of plane waves can be reduced
to the linear superposition of propagating Fourier harmonics with Hankel functions.


To analyze the solvability of the scattering problem, it is convenient to reduce
to the truncated domain $\Omega_R$.
Define the pseudo-periodic field space
$H^1_\kappa(\Omega_R)=\{u\in H^1(\Omega_R):u(x, y, \pi)=u(x,y,-\pi)e^{2\pi\kappa i}\}$.
On the boundary $\Gamma_R$, the radiation condition
is characterized by a Dirichlet-to-Neumann map
$T:H^{\frac{1}{2}}_\kappa(\Gamma_R)\rightarrow H^{-\frac{1}{2}}_\kappa(\Gamma_R)$
(as in the Definition 5.19 of \cite{CakoniColton2006})
\begin{equation}
T:\sum_{m,\ell}\hat{u}_{m\ell}e^{i\ell\theta}e^{i(m+\kappa)z} \mapsto
   \sum_{m,\ell}\gamma_{m\ell}\hat{u}_{m\ell}e^{i\ell\theta}e^{i(m+\kappa)z},
\end{equation}
where
\[
\gamma_{m\ell}=
\begin{cases}
\frac{-\eta_m H^{1'}_\ell(\eta_mR)}{H^1_\ell(\eta_mR)}, &\mbox{ if } m\not\in \mathcal{Z}_a, \\
|\ell|R^{-1}, &\mbox{ if } m\in\mathcal{Z}_a \mbox{ and }\ell\neq0, \\
0, &\mbox{ if }  m\in\mathcal{Z}_a \mbox{ and }\ell=0.
\end{cases}
\]
To satisfy the radiation condition, the harmonics in \eqref{eqn:Expansion} with
$H^2_\ell(\eta_m r)$ for $m\in \mathcal{Z}_p\cup\mathcal{Z}_e$, harmonics
$(c_{m1}+c_{m2}\ln|r|) e^{i\ell\theta}e^{i(m+k)z}$ for $m\in\mathcal{Z}_a, \ell=0$, and harmonics
with $|r|^\ell$ for $m\in\mathcal{Z}_a, \ell>0$ all vanish.
The radiation condition is hence enforced by
\begin{equation}
\partial_n u+Tu=0 \mbox{ on } \Gamma_R.
\end{equation}
The operator $T$ has a ``nonnegative evanescent'' part $T_{e}$ and a ``propagating'' part $T_{p}$:
\begin{equation}
T=T_{e}+T_{p}.
\end{equation}
\begin{equation}
\widehat{(T_{e}f)}_{m\ell}=
\begin{cases}
\frac{-\eta_m H^{1'}_\ell(\eta_mR)}{H^1_\ell(\eta_mR)}\hat f_{m\ell},  \mbox{ if } m\in \mathcal{Z}_e, \\
|\ell|R^{-1}\hat f_{m\ell},  \mbox{ if } m\in\mathcal{Z}_a \mbox{ and } \ell\neq 0 ,\\
0, \mbox{ otherwise },
\end{cases}
\end{equation}
\begin{equation}
\widehat{(T_{p}f)}_{m\ell}=
\begin{cases}
\frac{-\eta_m H^{1'}_\ell(\eta_mR)}{H^1_\ell(\eta_mR)}\hat f_{m\ell},  \mbox{ if } m\in \mathcal{Z}_p, \\
0, \mbox{ otherwise }.
\end{cases}
\end{equation}

The variational form of the scattering problem in the truncated domain is
\begin{problem}[Scattering problem, variational form]\label{prob:weak1}
\begin{equation}
\left\{
\begin{split}
& u\in H^1_\kappa(\Omega_R)  \\
& a(u,v)-\omega^2b(u,v)=f(v),  \forall v\in H^1_\kappa(\Omega_R)
\end{split}
\right.
\end{equation}
where
\[
\begin{split}
a(u,v)&=\int_{\Omega_R}\frac{1}{\mu}\nabla u\cdot\nabla \bar v+\frac{1}{\mu_0}\int_{\Gamma_R}(Tu)\bar v,\\
b(u,v)&=\int_{\Omega_R}\ep u \bar v,\\
f(v)&=\frac{1}{\mu_0}\int_{\Gamma_R}\left[(\partial_n u^{inc}+Tu^{inc})\bar v\right].\\
\end{split}
\]
\end{problem}

%
The variational form in problem \ref{prob:weak1} can be written as
\[
a(u,v)-\omega^2b(u,v)=c_1(u,v)+c_2(u,v)
\]
with
$c_1(u,v)=\int_{\Omega_R}(\frac{1}{\mu}\nabla u\cdot\nabla \bar v+\ep u \bar v )+\frac{1}{\mu_0}\int_{\Gamma_R}(Tu)\bar v$
and $c_2(u,v)=-\ep(\omega^2+1)\int_{\Omega_R}u\bar v$.
Define operators $C_1$ and $C_2$ on $H^1_\kappa(\Omega_R)$ by
$(C_1u,v)_{H^1_\kappa(\Omega_R)}=c_1(u,v)$ and
$(C_2u,v)_{H^1_\kappa(\Omega_R)}=c_2(u,v)$.
Because of the coercivity of $c_1$ and the compact embedding of $L^2(\Omega)$ into $\Honek$,
we know that the operator $C_1$ is an automorphism and $C_2$ is compact.

If we denote by $f^{inc}$ the unique element of $H^1_\kappa(\Omega_R)$ such that
$(f^{inc},v)_{H^1_\kappa(\Omega_R)}=f(v)$,
the variational form of the scattering problem can be characterized by the following
operator form
\[
(C_1u,v)+(C_2u,v)=(f^{inc},v), \forall v\in\Honek,
\]
i.e.
\[
C_1u+C_2u=f^{inc}.
\]
The Fredholm alternative theory implies that the nonuniqueness of the solution
of this problem is equivalent to the singularity of the corresponding homogeneous
problem $C_1u+C_2u=0$, whose weak form is given by
\begin{equation}\label{eqn:weakhomogeneous}
a(u,v)-\omega^2 b(u,v)=0, \forall v\in H^1_\kappa(\Omega_R).
\end{equation}

\begin{theorem}
The scattering problem has at least one solution, and the set of solutions is at most finite
dimensional.
\end{theorem}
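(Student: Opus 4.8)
The plan is to read the theorem as the Fredholm alternative for the operator equation $C_1u+C_2u=f^{inc}$, and then to upgrade the generic Fredholm conclusion to \emph{unconditional} existence by exploiting the radiation structure of the incident field. The finite-dimensionality statement is the easy half. Since $C_1$ is an automorphism of $\Honek$ and $C_2$ is compact, the composite $K:=C_1^{-1}C_2$ is compact, so $C_1+C_2=C_1(I+K)$ with $I+K$ a Fredholm operator of index zero. In particular $\ker(C_1+C_2)=\ker(I+K)$ is finite dimensional; since the solution set of the inhomogeneous equation is either empty or an affine translate $u_0+\ker(C_1+C_2)$, it is at most finite dimensional as soon as it is nonempty.

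The genuine content is therefore existence, i.e.\ that $f^{inc}$ always lies in the range of $C_1+C_2$. Because $I+K$ is Fredholm, this range is closed and equals $(\ker(C_1+C_2)^*)^\perp$, so it suffices to show $f(w)=(f^{inc},w)_{\Honek}=0$ for every $w$ in the adjoint kernel. Writing $s(u,v):=a(u,v)-\omega^2b(u,v)$, a member $w$ of $\ker(C_1+C_2)^*$ is exactly a field with $s(v,w)=0$ for all $v\in\Honek$ (the homogeneous problem \eqref{eqn:weakhomogeneous} with the incoming, i.e.\ time-reversed, radiation condition); taking $v=w$ in this relation gives the scalar identity $s(w,w)=0$.

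The key step — which I expect to carry the proof — is that $s(w,w)=0$ forces $w$ to have no propagating Fourier--Bessel harmonics. Because $\ep$ and $\mu$ are real, the volume integrals in $s(w,w)$ and all of $b$ are real, so the only imaginary contribution is the boundary term $\Im s(w,w)=\frac{1}{\mu_0}\Im\int_{\Gamma_R}(Tw)\bar w$. Diagonalising $T$ in the basis $\{e^{i\ell\theta}e^{i(m+\kappa)z}\}$, the evanescent and algebraic symbols $\gamma_{m\ell}$ are real, whereas for $m\in\mathcal{Z}_p$ a Wronskian identity for the Hankel functions gives $\Im\gamma_{m\ell}=-2/(\pi R\,|H^1_\ell(\eta_mR)|^2)<0$. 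Thus $0=\Im s(w,w)$ is a sum of strictly negative multiples of $|\hat w_{m\ell}|^2$ over $m\in\mathcal{Z}_p$, forcing $\hat w_{m\ell}=0$ for every propagating index. (The same computation applied to $\ker(C_1+C_2)$ recovers the familiar fact that guided modes are evanescent.)

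With this lemma in hand, existence is immediate. The incident field $u^{inc}$ is a superposition of propagating harmonics only, so the datum $\partial_n u^{inc}+Tu^{inc}$ entering $f$ is supported entirely on indices $m\in\mathcal{Z}_p$, while the trace of $w$ on $\Gamma_R$ has vanishing propagating part. Orthogonality of the system $\{e^{i\ell\theta}e^{i(m+\kappa)z}\}$ on $\Gamma_R$ then kills every term of $f(w)=\frac{1}{\mu_0}\int_{\Gamma_R}(\partial_n u^{inc}+Tu^{inc})\bar w$, giving $f^{inc}\perp\ker(C_1+C_2)^*$ and hence a solution. The main obstacle is thus not the abstract Fredholm machinery but the two structural facts underneath it — the definite sign of $\Im\gamma_{m\ell}$ from the Bessel Wronskian, and the bookkeeping that the $\Gamma_R$ pairing is genuinely diagonal in the Fourier--Bessel basis — and one must take care that the pseudo-periodic top and bottom faces of $\Omega_R$ contribute no boundary flux so that these reductions are valid.
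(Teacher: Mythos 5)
Your proposal is correct and follows essentially the same route as the paper: the Fredholm alternative for $C_1+C_2$ (with $C_1$ an automorphism and $C_2$ compact), followed by showing that every element $w$ of the adjoint kernel has vanishing propagating trace coefficients $\hat w_{m\ell}$, $m\in\mathcal{Z}_p$, so that $f(w)=\frac{1}{\mu_0}\int_{\Gamma_R}(\partial_n u^{inc}+Tu^{inc})\bar w=0$ since $u^{inc}$ is supported on propagating harmonics. Your only addition is to make explicit, via $\Im\gamma_{m\ell}=-2/(\pi R\,|H^1_\ell(\eta_m R)|^2)<0$ from the Bessel Wronskian, the step the paper compresses into ``by the decomposition of $T$, $\hat w_m=0$ for $m\in\mathcal{Z}_p$'' --- a worthwhile elaboration, but the same argument.
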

\begin{proof}
From equation \eqref{idty:Incident}, we can express the incident plane wave as a superposition of
harmonics
$\displaystyle\sum_\ell\frac{1}{2}\left[H^1_\ell(\eta_m r)+H^2_\ell(\eta_m r)\right]e^{i\ell(\theta+\theta_0)}e^{i(m+\kappa)z}$,
with $m\in\mathcal{Z}_p$.
By the Fredholm alternative, the scattering problem has a solution if and only if
\[
(f^{inc},w)=0, \mbox{ for all } w\in\mbox{Null}(C_1+C_2)^\dagger,
\]
 i.e. for all $w$ such that
\[
a(v,w)-\omega^2 b(v,w)=0, \forall v\in H^1_\kappa(\Omega_R)
\]
This $w$ satisfies
\[
\overline{a(w,v)}-\omega^2 b(w,v)=0, \forall v\in H^1_\kappa(\Omega_R)
\]
and by the decomposition of $T$, we know that for all $m\in\mathcal{Z}_p$,
$\hat{w}_m=0$.
By the definition of $f^{inc}$, showing $(f^{inc},w)=0$ is equivalent to
showing that $\int_{\Gamma_R}(\partial_n+T)u^{inc}\bar w=0$.
This is satisfied by the function $w$ above.

The space of solutions is finite dimensional because $C_1$ is invertible
and $C_2$ is compact.
%
\end{proof}

\section{Guided Modes}

A guided mode is a solution to the Helmholtz equation in the periodic domain
in the absence of any source field.
In the weak form
, it is a solution to the homogeneous equation (\ref{eqn:weakhomogeneous}).

The sesquilinear form
\[
a^\omega(u,v)=\int_{\Omega_R}\frac{1}{\mu}\nabla u\cdot\nabla\bar v+\frac{1}{\mu_0}\int_{\Gamma_R}(T^\omega u)\bar v
\]
can be split into evanescent and propagating parts,
\[
a^\omega_e(u,v)=\int_{\Omega_R}\frac{1}{\mu}\nabla u\cdot\nabla\bar v+\frac{1}{\mu_0}\int_{\Gamma_R}(T^\omega_eu)\bar v,
\]
\[
a^\omega_p(u,v)=\frac{1}{\mu_0}\int_{\Gamma_R}(T^\omega_pu)\bar v.
\]
If the frequency and the wavenumber are assumed to be real, the  form $a^\omega_e(u,v)$ is Hermitian,
and we have the following theorem.
%

\begin{theorem}(Real eigenvalues)
If the frequency $\omega$ is real, then $u\in H^1_\kappa(\Omega_R)$ solves the equation
(\ref{eqn:weakhomogeneous}) if and only if
\begin{equation}\label{eqn:WeakReal1}
a_e^\omega(u,v)-a_p^\omega(u,v)-\omega^2b(u,v)=0,\forall v\in H^1_\kappa(\Omega_R)
\end{equation}
and if and only if
\begin{equation}\label{eqn:WeakReal2}
\begin{cases}
a_e^\omega(u,v)-\omega^2 b(u,v)=0, \forall v\in H^1_\kappa(\Omega_R), \\
\widehat{(u|_{\Gamma_R})}_m=0, \forall m\in\mathcal{Z}_p.
\end{cases}
\end{equation}
\end{theorem}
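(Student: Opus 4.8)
The plan is to reduce both equivalences to a single structural fact about the propagating part of the Dirichlet-to-Neumann operator: for real $\omega$ and real $\kappa$, the diagonal symbol $\gamma_{m\ell}$ of $T_p^\omega$ on each propagating harmonic ($m\in\mathcal{Z}_p$) has strictly negative imaginary part, of one fixed sign independent of $m$ and $\ell$. Granting this, the Hermitian form $a_e^\omega$ and the Hermitian form $b$ contribute nothing to the imaginary part on the diagonal, so the only source of $\mathrm{Im}$ in the diagonal of \eqref{eqn:weakhomogeneous} or \eqref{eqn:WeakReal1} is $a_p^\omega(u,u)$, and its vanishing will force the propagating harmonics of $u$ on $\Gamma_R$ to vanish.

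First I would dispatch the easy implications, namely that \eqref{eqn:WeakReal2} implies both \eqref{eqn:weakhomogeneous} and \eqref{eqn:WeakReal1}. Since $T_p^\omega u$ depends only on the propagating Fourier--Bessel coefficients of $u$ on $\Gamma_R$, the constraint $\widehat{(u|_{\Gamma_R})}_m=0$ for $m\in\mathcal{Z}_p$ gives $a_p^\omega(u,v)=0$ for every $v\in H^1_\kappa(\Omega_R)$. Then $a^\omega=a_e^\omega+a_p^\omega$ agrees with $a_e^\omega$ on the pair $(u,\cdot)$, so the first line of \eqref{eqn:WeakReal2} is literally \eqref{eqn:weakhomogeneous}; and because $a_p^\omega(u,\cdot)\equiv 0$, the sign in front of it is irrelevant, which yields \eqref{eqn:WeakReal1} as well.

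The main work is the reverse passage from \eqref{eqn:weakhomogeneous} (and, identically, from \eqref{eqn:WeakReal1}) to \eqref{eqn:WeakReal2}. I would test the identity against $v=u$. Using that $a_e^\omega(u,u)\in\RR$ by Hermiticity of $a_e^\omega$, that $b(u,u)\in\RR$, and that $\omega^2\in\RR$, taking imaginary parts annihilates every term except $a_p^\omega(u,u)$, so $\mathrm{Im}\,a_p^\omega(u,u)=0$; for \eqref{eqn:WeakReal1} the only change is an overall sign on this term, which does not affect the conclusion. Diagonalizing by orthogonality of $\{e^{i\ell\theta}e^{i(m+\kappa)z}\}$ on the cylinder $r=R$,
\[
a_p^\omega(u,u)=\frac{1}{\mu_0}\sum_{m\in\mathcal{Z}_p}\sum_{\ell}\gamma_{m\ell}\,|\hat u_{m\ell}|^2\,\| e^{i\ell\theta}e^{i(m+\kappa)z}\|_{L^2(\Gamma_R)}^2,
\]
and invoking $\mathrm{Im}\,\gamma_{m\ell}<0$, I would conclude $\hat u_{m\ell}=0$ for all $m\in\mathcal{Z}_p$, which is the second line of \eqref{eqn:WeakReal2}. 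The first line then follows exactly as in the easy direction, since the propagating part drops out once these coefficients vanish.

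The crux, and the step I expect to be the main obstacle, is verifying the sign of $\mathrm{Im}\,\gamma_{m\ell}$. Here I would use $H^2_\ell=\overline{H^1_\ell}$ for real argument and integer order, together with the Bessel Wronskian $J_\ell Y_\ell'-J_\ell'Y_\ell=2/(\pi x)$ (equivalently the Hankel Wronskian from \cite{Watson1944}), to compute $\mathrm{Im}\big(H^{1\prime}_\ell/H^1_\ell\big)=(2/\pi x)/|H^1_\ell(x)|^2$ at $x=\eta_m R$, whence $\mathrm{Im}\,\gamma_{m\ell}=-(2/\pi R)/|H^1_\ell(\eta_m R)|^2<0$, uniformly over $m\in\mathcal{Z}_p$ and $\ell$. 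Care is also needed to confirm that the orthogonality of the boundary harmonics diagonalizes $a_p^\omega$ with strictly positive norm weights, so that the sign-definite quadratic form forces each coefficient to vanish individually rather than merely their weighted sum.
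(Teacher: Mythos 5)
Your proof is correct, and it supplies an argument the paper itself omits: the theorem is stated there without proof, with only the analogous slab results of \cite{Bonnet-BeStarling1994} and \cite{ShipmanVolkov2007} as implicit background. Your route is precisely the standard one from that literature, transplanted to the cylindrical Dirichlet-to-Neumann symbol: test with $v=u$, use that $a_e^\omega$ and $b$ are Hermitian for real $\omega,\kappa$ (for $m\in\mathcal{Z}_e$ the symbol reduces to the real positive quantity $|\eta_m|\bigl(-K_\ell'/K_\ell\bigr)(|\eta_m|R)$, and $|\ell|/R$ on the algebraic harmonics), and isolate $\mathrm{Im}\,a_p^\omega(u,u)$. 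Your Wronskian computation checks out, since $J_\ell Y_\ell'-J_\ell'Y_\ell=2/(\pi x)$ gives $\mathrm{Im}\bigl(H^{1\prime}_\ell/H^1_\ell\bigr)(x)=(2/\pi x)/|H^1_\ell(x)|^2$, hence $\mathrm{Im}\,\gamma_{m\ell}=-\,(2/\pi R)\,|H^1_\ell(\eta_m R)|^{-2}<0$ uniformly over $m\in\mathcal{Z}_p$ and $\ell$, and the boundary harmonics are indeed $L^2(\Gamma_R)$-orthogonal with the positive weight $4\pi^2R$, so the sign-definite diagonal form kills each propagating coefficient individually; the converse direction is trivial exactly as you say, since $T_p^\omega u$ depends only on those coefficients.
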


The eigenfrequencies can be obtained by applying the min-max principle to the real form
in \eqref{eqn:WeakReal2}.
When $\omega<\sqrt{\frac{\kappa^2}{\epsilon_0\mu_0}}$, the
solutions $u$ of $a_e^\omega(u,v)-\omega^2 b(u,v)=0, \forall v\in H^1_\kappa(\Omega_R)$
are guided modes 
since this regime admits no propagating harmonics and hence the second conditions
in \eqref{eqn:WeakReal2} are automatically satisfied.
When $\omega\ge\sqrt{\frac{\kappa^2}{\epsilon_0\mu_0}}$, to be guided modes, these solutions $u$
must satisfy the extra conditions $\widehat{(u|_{\Gamma_R})}_m=0, \forall m\in\mathcal{Z}_p$
where $\mathcal{Z}_p$ is nonempty.
We will design some periodic structures that admit guided modes in the next section.
We have the following theorem on properties of the frequencies.
The proof is similar to that for periodic slabs, for which one may refer to
\cite{ShipmanVolkov2007}\cite{Bonnet-BeStarling1994}.

\begin{theorem}(Eigenvalues and Frequencies)\label{thm:EigenPillar}
The problem $a_e^\omega(u,v)-\lambda b(u,v)=0, \forall v\in H^1_\kappa(\Omega_R)$ has a
nondecreasing sequence of eigenvalues
$\{\lambda_j\}^\infty_{j=1} $,
obtained through the minmax principle,
\begin{equation}
\lambda_j=\sup_{\dim V=j-1,V\subset\Honek} \inf_{u\in V^\bot\setminus0}\frac{a_e(u,u)}{b(u,u)},
\end{equation}
which tend to $+\infty$ as $j\rightarrow\infty$.
Moreover, the homogeneous problem
$a^\omega_e(u,v)-\omega^2 b(u,v)=0, \forall v\in H^1_\kappa(\Omega_R)$ has a nontrivial
solution if and only if $\omega^2=\lambda_j(\omega)$.
This frequency can be denoted by $\omega_j$.

%
%
\end{theorem}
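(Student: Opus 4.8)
The plan is to recognize the eigenvalue problem $a_e^\omega(u,v)=\lambda\, b(u,v)$ as the weak form of a self-adjoint operator with compact resolvent, and then to invoke the standard machinery: Lax--Milgram to build a solution operator, the spectral theorem for compact self-adjoint operators to produce the eigenvalues, and the Courant--Fischer theorem to extract the min-max formula. Throughout I fix real $\kappa$ and real $\omega$, so that, as noted just above the statement, $a_e^\omega$ is Hermitian and the evanescent term $T_e^\omega$ is nonnegative.

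First I would verify that the shifted form $\tilde a(u,v):=a_e^\omega(u,v)+b(u,v)$ is bounded and coercive on \Honek. Boundedness follows from $1/\mu\le 1/\mu_-$, from $\ep\le\ep_+$, and from the continuity of the evanescent Dirichlet-to-Neumann map together with the trace theorem. Coercivity is the one place where the sign of $T_e^\omega$ genuinely enters: since $\frac{1}{\mu_0}\int_{\Gamma_R}(T_e^\omega u)\bar u\ge 0$ by nonnegativity of the evanescent part, one has $a_e^\omega(u,u)\ge \mu_+^{-1}\|\nabla u\|_{L^2(\Omega_R)}^2$, and adding $b(u,u)\ge \ep_-\|u\|_{L^2(\Omega_R)}^2$ controls the full $H^1$ norm. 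Thus $\tilde a$ is an inner product on \Honek\ equivalent to the standard one.

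Next I would introduce the solution operator $K:\Honek\to\Honek$ defined by the Riesz representation $\tilde a(Ku,v)=b(u,v)$ for all $v\in\Honek$. Since $b$ and $\tilde a$ are Hermitian, $K$ is self-adjoint for $\tilde a$; since $\tilde a(Ku,u)=b(u,u)>0$ for $u\neq 0$, it is positive; and since $b(u,v)$ depends on $u$ only through its $L^2(\Omega_R)$ class, the compact embedding $\Honek\hookrightarrow L^2(\Omega_R)$ already used for the compactness of $C_2$ makes $K$ compact. The spectral theorem then yields a nonincreasing sequence of positive eigenvalues $\mu_j\to 0^+$. Setting $\lambda_j:=\mu_j^{-1}-1$ and using that $u$ solves $a_e^\omega(u,v)=\lambda\,b(u,v)$ if and only if $Ku=(\lambda+1)^{-1}u$, I obtain the nondecreasing sequence $\{\lambda_j\}$ with $\lambda_j\to+\infty$.

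Finally, the Courant--Fischer min-max characterization for $K$, rewritten in terms of the generalized Rayleigh quotient $a_e^\omega(u,u)/b(u,u)$ with orthogonality taken in the $b$ inner product, gives exactly the stated sup-inf formula. The concluding equivalence is then immediate: for fixed real $\omega$ the form $a_e^\omega$ and hence its spectrum $\{\lambda_j(\omega)\}$ are determined, and the homogeneous problem $a_e^\omega(u,v)-\omega^2 b(u,v)=0$ has a nontrivial solution precisely when $\omega^2$ is one of these eigenvalues, i.e.\ $\omega^2=\lambda_j(\omega)$ for some $j$. I expect the main obstacle to be technical rather than conceptual: establishing coercivity and, above all, compactness of $K$ requires the evanescent term to be simultaneously nonnegative (for coercivity) and bounded $H^{1/2}_\kappa(\Gamma_R)\to H^{-1/2}_\kappa(\Gamma_R)$ (for boundedness), and one must keep in mind that the $\omega$-dependence of $a_e^\omega$ through the $\eta_m$ only reshuffles which harmonics are evanescent, so the entire construction must be read as holding for each fixed $\omega$, with the resulting $\lambda_j(\omega)$ feeding into the final self-consistency condition.
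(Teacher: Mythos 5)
Your proposal is correct and takes essentially the same route the paper itself intends: the paper gives no written proof for this theorem, deferring to the standard argument for periodic slabs in \cite{Bonnet-BeStarling1994} and \cite{ShipmanVolkov2007}, which is precisely the shifted-form/Lax--Milgram, compact self-adjoint solution operator, and Courant--Fischer machinery you reconstruct, with the nonnegativity and $H^{1/2}_\kappa(\Gamma_R)\to H^{-1/2}_\kappa(\Gamma_R)$ boundedness of $T^\omega_e$ correctly identified as the only structure-specific inputs. The one point left implicit on both sides is the continuity and monotonicity of $\omega\mapsto\lambda_j(\omega)$ that would justify the closing notation $\omega_j$ as a well-defined root of $\omega^2=\lambda_j(\omega)$, but since the theorem's operative claims are the eigenvalue sequence, the min-max formula, divergence, and the stated equivalence, your argument covers everything the paper asserts.
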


The problem of scattering and guided modes can be posed equivalently through a positive
form in the entire strip $\Omega$.
This formulation leads to the determination of the spectrum of the scattering problem, and
in particular, the eigenvalues below the continuous spectrum.
However, the formulation on the truncated domain $\Omega_R$ is necessary for the determination
of the embedded eigenvalues, which will be our focus in the next section.

We can derive the weak form of the guided modes problem:
\begin{equation}
a_S(u,v)=\omega^2b_S(u,v), \forall v\in H^1_\kappa(\Omega)
\end{equation}
where
\begin{eqnarray}
a_S(u,v)=\int_\Omega\frac{1}{\mu}\nabla u\cdot\nabla\bar v,  \\
b_S(u,v)=\int_\Omega\ep u\bar v.
\end{eqnarray}
The associated operator is the unbounded operator
\begin{equation}
S_\kappa u=-\frac{1}{\ep}\nabla\cdot\frac{1}{\mu}\nabla u.
\end{equation}
It is defined on the domain
$D(S_\kappa)=\{u\in H^1_\kappa(\Omega):
\mbox{ There exists a } C \mbox{ such that }  |a_S(u,v)|\leq C\sqrt{b_S(v,v)}, \forall v\in H^1_\kappa(\Omega)\}$.
This operator is positive self-adjoint and its eigenvectors and eigenvalues
are solutions of the guided modes problem. We denote the spectrum of $S_\kappa$ as $\sigma$, and
its essential spectrum as $\sigma_{ess}$.
The following theorem is an adaptation of Theorem 4.1 of \cite{Bonnet-BeStarling1994} to periodic pillars.
\begin{theorem}
i) $\sigma\subset [\frac{\kappa^2}{\mu_+\ep_+},+\infty)$, where $\mu_+=\sup_{\Omega}\mu, \ep_+=\sup_{\Omega}\ep$; \\
ii) $\sigma^{ess}=[\frac{\kappa^2}{\mu_0\ep_0},+\infty)$;  \\
iii)  there are finitely many eigenvalues $\tilde\lambda_j(\kappa)$ below $\frac{\kappa^2}{\mu_0\ep_0}$, and
 $\{\tilde\lambda_j(\kappa)\}$ is an increasing sequence that converges to $\frac{\kappa^2}{\mu_0\ep_0}$.
\end{theorem}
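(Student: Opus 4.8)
The three assertions are proved separately; the essential-spectrum computation in (ii) is the crux, while (i) and (iii) are comparatively routine once the threshold $\kappa^2/(\mu_0\ep_0)$ is identified.

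For (i), the plan is to bound the Rayleigh quotient $a_S(u,u)/b_S(u,u)$ from below uniformly on $H^1_\kappa(\Omega)$. I would expand any pseudo-periodic $u$ in $z$-harmonics, $u(x,y,z)=\sum_{m\in\ZZ}u_m(x,y)e^{i(m+\kappa)z}$, so that $\int_\Omega|\partial_z u|^2 = 2\pi\sum_m (m+\kappa)^2\|u_m\|_{L^2}^2$. Since $\kappa\in[-1/2,1/2)$, the minimum of $(m+\kappa)^2$ over $m\in\ZZ$ is $\kappa^2$, attained at $m=0$, whence $\int_\Omega|\partial_z u|^2\ge\kappa^2\int_\Omega|u|^2$. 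Using $\mu\le\mu_+$ and $\ep\le\ep_+$ then gives $a_S(u,u)\ge \tfrac{1}{\mu_+}\int_\Omega|\partial_z u|^2\ge\tfrac{\kappa^2}{\mu_+\ep_+}b_S(u,u)$, and since $S_\kappa$ is self-adjoint in the $b_S$ inner product, its spectrum lies in $[\kappa^2/(\mu_+\ep_+),\infty)$.

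For (ii), I would prove two inclusions. To show $\sigma_{ess}\subset[\kappa^2/(\mu_0\ep_0),\infty)$, observe that for $u$ supported in the homogeneous region $r>R$ (where $\ep=\ep_0$, $\mu=\mu_0$) the same $z$-harmonic estimate gives $a_S(u,u)\ge\tfrac{\kappa^2}{\mu_0\ep_0}b_S(u,u)$; a Persson-type characterization of the bottom of the essential spectrum—the supremum over compact $K\subset\Omega$ of the infimum of the Rayleigh quotient over functions supported in $\Omega\setminus K$—then yields $\inf\sigma_{ess}\ge\kappa^2/(\mu_0\ep_0)$. For the reverse inclusion I would construct singular (Weyl) sequences: given $\lambda\ge\kappa^2/(\mu_0\ep_0)$, write $\lambda=(\kappa^2+t^2)/(\mu_0\ep_0)$ with $t\ge0$ and set $u_n(x,y,z)=\chi_n(x,y)\,e^{it\,\hat e\cdot(x,y)}e^{i\kappa z}$, where $\hat e$ is a fixed unit vector and $\chi_n$ is a smooth cutoff equal to $1$ on a widening annulus $R_n<r<2R_n$ with $R_n\to\infty$ and $|\nabla\chi_n|\lesssim R_n^{-1}$. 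These lie in the homogeneous exterior, so $(S_\kappa-\lambda)u_n=-\tfrac{1}{\ep_0\mu_0}\big(2\nabla\chi_n\cdot\nabla(e^{it\hat e\cdot(x,y)}e^{i\kappa z})+(\Delta\chi_n)\,e^{it\hat e\cdot(x,y)}e^{i\kappa z}\big)$, the leading constant-coefficient part having cancelled $\lambda u_n$. A direct count shows $\|(S_\kappa-\lambda)u_n\|_{L^2}=O(1)$ while $\|u_n\|_{L^2}\sim R_n$, and the supports escape to infinity, so after normalization $u_n\rightharpoonup 0$ with $\|(S_\kappa-\lambda)u_n\|\to0$; hence $\lambda\in\sigma_{ess}$. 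The main obstacle is exactly this construction: one must verify that the commutator terms from the cutoff are $o(\|u_n\|)$—which forces the annuli to widen rather than merely translate—and that the quasimodes genuinely converge weakly to zero in the $b_S$-Hilbert space.

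For (iii), since $\inf\sigma_{ess}=\kappa^2/(\mu_0\ep_0)$ from (ii), any spectrum strictly below this threshold consists, by the spectral theorem for semibounded self-adjoint operators, of isolated eigenvalues of finite multiplicity. Ordering them via the min-max principle applied to $a_S-\lambda b_S$ produces a nondecreasing sequence $\{\tilde\lambda_j(\kappa)\}$ whose only possible accumulation point is the bottom of the essential spectrum: there are finitely many below any value strictly less than $\kappa^2/(\mu_0\ep_0)$, and the sequence increases to $\kappa^2/(\mu_0\ep_0)$. This part is routine once (ii) is established.
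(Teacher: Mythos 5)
The paper itself offers no proof of this theorem---it is stated as an adaptation of Theorem 4.1 of Bonnet-Bendhia and Starling \cite{Bonnet-BeStarling1994}---so your proposal has to be measured against that standard argument. Your parts (i) and (ii) are correct and in the expected spirit: the $z$-harmonic estimate $\int_\Omega|\partial_z u|^2\ge\kappa^2\int_\Omega|u|^2$ (valid because $\kappa\in[-\frac{1}{2},\frac{1}{2})$ makes $\min_{m\in\ZZ}(m+\kappa)^2=\kappa^2$) yields (i), and your Weyl sequence of plane waves $e^{it\hat e\cdot(x,y)}e^{i\kappa z}$ cut off on widening annuli is sound: the functions are $\kappa$-pseudo-periodic and supported in the homogeneous region $r>R$, the commutator terms are $O(1)$ in $L^2$ while $\|u_n\|\sim R_n$, so every $\lambda\ge\frac{\kappa^2}{\mu_0\epsilon_0}$ lies in the essential spectrum. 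For the opposite inclusion, Persson's characterization does work, though you should note it is usually stated for Schr\"odinger operators on $\RR^n$; for this weighted divergence-form operator on a periodic strip the more elementary Glazman decomposition (impose a Neumann condition on $\Gamma_{R'}$ for some $R'>R$; the exterior piece obeys the same $z$-harmonic bound with constants $\epsilon_0,\mu_0$ regardless of the lateral boundary condition, and the interior piece has compact resolvent) gives the same lower bound with less machinery---and, as explained next, you need that decomposition anyway.

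The genuine gap is in (iii). From (ii) and self-adjointness you correctly conclude that the spectrum strictly below $\frac{\kappa^2}{\mu_0\epsilon_0}$ consists of isolated eigenvalues of finite multiplicity, with finitely many below any $\lambda'<\frac{\kappa^2}{\mu_0\epsilon_0}$; but this does \emph{not} prove the theorem's claim that there are finitely many eigenvalues below the threshold itself. Abstract spectral theory permits an infinite sequence of eigenvalues accumulating at $\inf\sigma^{ess}$ from below (as happens for Schr\"odinger operators with slowly decaying wells), so total finiteness is a substantive assertion that must exploit the fact that $\epsilon-\epsilon_0$ and $\mu-\mu_0$ are supported in $r\le R$. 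Neumann bracketing supplies exactly this: decoupling at $\Gamma_{R'}$ enlarges the form domain and hence only lowers the min-max values; the exterior Neumann operator has no spectrum below $\frac{\kappa^2}{\mu_0\epsilon_0}$ because the bound $\int|\partial_z u|^2\ge\kappa^2\int|u|^2$ is insensitive to the lateral boundary condition; and the interior Neumann operator on the bounded cell $\Omega_{R'}$ has discrete spectrum, hence finitely many eigenvalues below the fixed threshold. Counting min-max values then bounds the number of eigenvalues of $S_\kappa$ below $\frac{\kappa^2}{\mu_0\epsilon_0}$ by a finite number. Read in this light, the min-max values $\tilde\lambda_j(\kappa)$ that do not correspond to genuine eigenvalues equal the threshold, which is the consistent way to reconcile ``finitely many'' with ``converges to $\frac{\kappa^2}{\mu_0\epsilon_0}$'' in the statement; your sketch, as written, leaves the accumulation-from-below scenario open.
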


\section{Existence and Nonexistence of Guided Modes}
\subsection{Existence}
The focus of this section is to find guided modes with frequency $\omega$ such
that $\omega^2$ is embedded in the continuous spectrum of $S_\kappa$.
As discussed in the previous section, certain extra conditions should be
satisfied and hence bring the difficulty.

In \cite{Bonnet-BeStarling1994}, guided modes are proved to exist in a symmetric structure and
a periodic slab with a finer periodicity.
The idea is to consider a closed subspace $F$ on which the operator $S_\kappa$ has
a cutoff frequency that is greater the cutoff frequency on $\Honek$, and prove
the existence of guided modes corresponding to eigenfrequencies lying between
 these two cutoff frequencies.
 These eigenfunctions are automatically guided modes lying
 in $F$ because their frequencies are below the cutoff frequency,
 but the frequencies are embedded in the essential spectrum of $S_\kappa$ for $\Honek$.
In their proof, the embedded guide modes retain the original pseudo-periodicity, but they
are simply non-embedded guided modes with a smaller pseudo-period.
By artificially choosing a larger period, any guided modes with frequencies below
the cutoff frequency can be seen as embedded guided modes in the same
structure with the larger period.
%
%
In this paper, we present a proof of the existence of non-artificial 
guided modes with frequencies embedded in the essential spectrum the operator $S_\kappa$.
%
We only need the parameters $\epsilon,\mu$ to have smaller period, but the guided modes
do not have smaller pseudo-period.
%
%

Our newly designed pillar is a periodic structure with period $\frac{2\pi}{L}$
for $L\ge2$ in $\mathbb{Z}$ that supports guided modes with pseudo-period
strictly greater than $\frac{2\pi}{L}$.
\begin{theorem}\label{thm:Existence}
For any $\kappa$ in the first Brillouin zone of the structure of period $2\pi$, there exists $\ep,\mu$ with period
$\frac{2\pi}{L}$ for $L\ge2$ that admits a guided mode with frequencies $\omega$ lying above
the cutoff frequency.
\end{theorem}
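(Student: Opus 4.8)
The plan is to manufacture the embedded mode by exploiting the extra translational symmetry that comes from giving $\epsilon,\mu$ the shorter period $2\pi/L$. Let $\mathcal{T}$ denote the shift $(\mathcal{T}u)(x,y,z)=u(x,y,z+2\pi/L)$. Because $\epsilon,\mu$ are $\mathcal{T}$-invariant and $H^1_\kappa(\Omega)$ is preserved by $\mathcal{T}$, the map $\mathcal{T}$ is unitary on $L^2(\Omega,\epsilon\,dV)$ and commutes with $S_\kappa$; hence $S_\kappa$ block-decomposes along the eigenspaces of $\mathcal{T}$. For $p=0,1,\dots,L-1$ let $F_p$ be the closed subspace of $\kappa$-pseudoperiodic fields whose $z$-harmonics $e^{i(m+\kappa)z}$ are supported on the residue class $m\equiv p\ (\mathrm{mod}\ L)$, so that $\mathcal{T}$ acts on $F_p$ as multiplication by $e^{i(p+\kappa)2\pi/L}$ and $S_\kappa=\bigoplus_{p=0}^{L-1}S_\kappa|_{F_p}$ orthogonally. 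Each $F_p$ is the wavenumber-dependent invariant subspace referred to in the introduction. First I would record, by repeating the exterior-harmonic analysis behind parts (ii)--(iii) of the spectral theorem for $S_\kappa$ but restricted to $F_p$, that $S_\kappa|_{F_p}$ has essential spectrum $[\omega_p^2,\infty)$ with the subspace cutoff $\omega_p^2=\frac{1}{\epsilon_0\mu_0}\min_{m\equiv p\,(\mathrm{mod}\,L)}(m+\kappa)^2$, and only discrete spectrum below $\omega_p^2$.

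The decisive point is that the cutoff of a single block can be pushed strictly above the global cutoff $\omega_0^2=\kappa^2/(\epsilon_0\mu_0)$ (which is the cutoff of the block $F_0$ containing $m=0$, hence $\inf\sigma_{ess}(S_\kappa)$). I would fix a residue $p\neq0$ with $\mathrm{dist}(p+\kappa,L\mathbb{Z})>|\kappa|$, so that $\omega_p^2>\omega_0^2$; this is always arrangeable, e.g. $L=2,\ p=1$ for $|\kappa|<\tfrac12$, and a larger $L$ with $p\approx L/2$ for $\kappa$ near the zone boundary, the latter also widening the admissible interval. Any eigenvalue $\lambda$ of $S_\kappa|_{F_p}$ lying in the nonempty window $(\omega_0^2,\omega_p^2)$ then does exactly what is wanted: its eigenfunction $u\in F_p$ has, in the exterior, only harmonics with $\eta_m^2=\epsilon_0\mu_0\lambda-(m+\kappa)^2<0$, so $u$ decays exponentially in $r$ and is a genuine guided mode satisfying the radiation condition, while $\omega=\sqrt\lambda>\omega_0$ lies above the cutoff and is embedded in $\sigma_{ess}(S_\kappa)=[\omega_0^2,\infty)$. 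All propagating harmonics of $S_\kappa$ live in the other blocks and are orthogonal to $u$, which is why they automatically vanish.

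It remains to choose $\epsilon,\mu$ so that $S_\kappa|_{F_p}$ has an eigenvalue in $(\omega_0^2,\omega_p^2)$, and here I would run a monotone continuity argument in the interior contrast. Fix the exterior values $\epsilon_0,\mu_0$ (so $\omega_0$ and $\omega_p$ stay put) and let a parameter $t$ scale up $\epsilon$ on a fixed region $r<R$, keeping $\epsilon$ admissible and $2\pi/L$-periodic; write $\lambda_1(t)=\inf\sigma(S_\kappa|_{F_p})$. At the homogeneous value of $t$ (interior equal to exterior) the medium has no bound states and $\lambda_1=\omega_p^2$. Testing with the single-harmonic field $u=\phi(x,y)e^{i(p+\kappa)z}\in F_p$, with $\phi$ supported in $r<R$, reduces $a_S(u,u)/b_S(u,u)$ to a two-dimensional Rayleigh quotient in $\phi$ with the $z$-averaged coefficients $\overline{1/\mu}$ and $\overline{\epsilon}$; as $t\to\infty$ the denominator $b_S(u,u)=\int_\Omega\epsilon|u|^2$ diverges, so this quotient, and hence $\lambda_1(t)$, is eventually below $\omega_0^2$. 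Since $\epsilon$ is pointwise increasing in $t$, the quotients are nonincreasing, so $\lambda_1$ is continuous and nonincreasing; by the intermediate value theorem some $t$ gives $\lambda_1(t)\in(\omega_0^2,\omega_p^2)$. The corresponding eigenfunction is the required guided mode with embedded frequency $\omega=\sqrt{\lambda_1(t)}$ above the cutoff.

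The main obstacle is precisely this last placement: the trial field only shows that $\lambda_1$ descends through $\omega_p^2$ and eventually below $\omega_0^2$, so what must be controlled is that it is caught strictly inside the window rather than dropping below the global cutoff (in which case the mode would be an ordinary one below the light cone rather than embedded). The continuity and monotonicity of $\lambda_1(t)$, together with the endpoint values $\lambda_1=\omega_p^2$ at zero contrast and $\lim_{t\to\infty}\lambda_1(t)<\omega_0^2$, are what guarantee a valid choice, and the freedom to enlarge $L$ and take $p\approx L/2$ makes the window $(\omega_0^2,\omega_p^2)$ as wide as needed. The remaining items---self-adjointness of $S_\kappa|_{F_p}$, the identification of its essential spectrum, and the exponential decay of subthreshold eigenfunctions---are routine adaptations of the arguments already used for $S_\kappa$ and for Theorem \ref{thm:EigenPillar}.
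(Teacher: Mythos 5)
Your construction is sound as far as the bare statement goes, and it is in fact a refinement of the paper's own mechanism: since the Fourier coefficients of $\epsilon$ and $\mu^{-1}$ in $z$ are supported on $L\ZZ$, each single residue class $F_p$ is indeed invariant, and the paper's subspace $V$ (with $L=2M+N+2$) is exactly the orthogonal sum $\bigoplus_{p=M+1}^{M+N+1}F_p$ of your blocks. Your window $(\omega_0^2,\omega_p^2)$ corresponds to the paper's tuning condition $\epsilon_0\mu_0\omega^2-(M+1+\kappa)^2<0<\epsilon_0\mu_0\omega^2-(M+\kappa)^2$, and your monotone-continuity placement of $\lambda_1(t)$ strictly inside the window is a more carefully argued version of the paper's one-line claim that the eigenfrequency decreases continuously in $\epsilon_1,\mu_1$; at that step your write-up is tighter than the paper's.

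The genuine gap is with respect to what the theorem is meant to deliver. A mode confined to a single class $F_p$ has the form $e^{i(p+\kappa)z}w(r,\theta,z)$ with $w$ of period $2\pi/L$, i.e.\ it is pseudo-periodic with pseudo-period $2\pi/L$; moreover, by construction its frequency lies below the block cutoff $\omega_p^2$, which is precisely the light cone of the $2\pi/L$-periodic structure at the reduced wavenumber. So your modes are exactly the ``folded'' non-embedded modes of the small-period pillar --- the construction the introduction explicitly disavows, and which the sentence immediately preceding Theorem~\ref{thm:Existence} excludes (``guided modes with pseudo-period strictly greater than $\frac{2\pi}{L}$''). The paper's proof is built to avoid this: it takes a forbidden band $|m-jL|\le M$ together with an \emph{allowed band of $N+1$ consecutive residue classes}, so that for $N\ge1$ the eigenfunction can mix classes, retain prime pseudo-period $2\pi$, and still have the $2M+1$ propagating harmonics $m=-M,\dots,M$ vanish identically on $V$. (In fairness, since each $F_p$ is separately invariant, even the paper's minimizer on $V$ can generically be chosen inside one class, so genuine class-mixing needs degeneracy or a further argument --- but that is the intended content, and your route forecloses it by design.) To repair your proof in the paper's spirit you would need to minimize over a union of at least two residue classes and show the selected eigenfunction has nonzero components in more than one of them.
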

\begin{proof}
Write $u\in\Honek$ as a Fourier expansion $u(r,\theta,z)=\displaystyle\sum_{m}u_m(r,\theta)e^{i(m+\kappa)z}$.
Given $M,N\in\mathbb{N}$ with $2M+N+2=L$, define a nontrivial subspace of $H^1_k(\Omega)$:
\begin{equation}
V=\left\{u\in H^1_k(\Omega): u_m(r,\theta)\equiv0 \mbox{, if } |m-j(2M+N+2)|\le M \mbox{ for some } j\in\ZZ\right\}
\end{equation}
Therefore, for $-M+j(2M+N+2)\le m\le M+j(2M+N+2)$, the coefficients $u_m(r,\theta)$ are $0$, and for
$M+1+j(2M+N+2)\le m\le M+N+1+j(2M+N+2)$, the coefficients $u_m(r,\theta)$ are possibly nonzero.
%

We claim that $\epsilon V\subseteq V$, $\mu^{-1} V\subseteq V$.
In fact, let $(\epsilon)_m(r,\theta)$ be the Fourier coefficients of $\epsilon$.
The periodicity of the structure implies that $(\epsilon)_m(r,\theta)\equiv0, \forall r,\theta$,  except
when $m=j(2M+N+2)$ for some integer $j$.
For any $u\in V$, if $|m-j(2M+N+2)|\le M$ for some $j\in\ZZ$, we calculate the $m^{th}$
Fourier coefficient of $\epsilon u$:
\[
\begin{split}
(\epsilon u)_m & =\sum_\ell(\epsilon)_\ell u_{m-\ell} \\
          & =\sum_j (\epsilon)_{j(2M+N+2)} u_{m-j(2M+N+2)} \\
          & =0   \mbox{ ,  because } u_{m-j(2M+N+2)}=0 \mbox{ for the field } u\in V.
\end{split}
\]
Therefore, $\epsilon u\in V$.
Similarly, $\mu^{-1}V\subseteq V$.

Therefore, the subspace $V$ is also invariant under the operator $\nabla\cdot\frac{1}{\mu}\nabla$.
Thanks to the invariance properties, we can consider the Helmholtz equation in the subspace $V$.
The solution $u\in V$ to the weak formulation $a_r^\omega(u,v)-\omega^2b(u,v)=0, \forall v\in V$
is also a solution to $a_r^\omega(u,v)-\omega^2b(u,v)=0,\forall v\in\Honek$.
In fact, for any field $u\in V$ and $v\in V^{\perp}$, $\nabla\cdot\frac{1}{\mu}\nabla u+\omega^2\ep u\in V$
implies that $\nabla\cdot\frac{1}{\mu}\nabla u\bar v+\omega^2\ep u\bar v=0$ for all $v\in V^{\perp}$.
Integrating it we obtain
\[
\begin{split}
\int_{\Omega}\nabla\cdot\frac{1}{\mu}\nabla u\bar v+\int_\Omega^2\ep u\bar v &=
   \frac{1}{\mu_0}\int_{\Gamma_R}\partial_nu\bar v-\int_\Omega\frac{1}{\mu}\nabla u\cdot\nabla\bar v+\int_\Omega^2\ep u\bar v\\
   &=-a^\omega_r(u,v)+b(u,v)\\
   &=0.
\end{split}
\]
%
We can obtain a pair $(\omega,u)$ by applying
the min-max principle to the Rayleigh quotient $\frac{a_r(u,u)}{b(u,u)}$ on the subspace
$V$ to obtain $\lambda_j(\omega)$ and solving the equation $\lambda_j(\omega)=\omega^2$.
Since $\omega$ is continuous and decreasing from $+\infty$ to $0$ in $\ep_1,\mu_1$ seperately, one
can choose the material parameters such that
$\epsilon_0\mu_0\omega^2-(M+1+\kappa)^2<0<\epsilon_0\mu_0\omega^2-(M+\kappa)^2$, i.e.
for any pair $(\kappa,\omega)$
there are $2M+1$ values $-M,-M+1,\ldots, M-1, M$ of $m$ corresponding to propagating harmonics.

The field $u$ obtained in the space $V$ is automatically a guided mode, as the propagating harmonics automatically vanish
in the subspace $V$.
\end{proof}

As an example, if we let $M=N=0$, then $2M+N+2=2$, $2M+1=1$, $N+1=1$,
The pillar has period $\pi$ and $\epsilon_{2j+1}=0$ for all $j$, and we can allow
  one propagating harmonic.
We apply the min-max principle on the space $V=\{u\in\Honek:u_{2j}=0, \forall j\}$
and by choosing proper $\ep_1$ we can obtain an eigenfunction of smallest period $2\pi$ that
is automatically a guided mode.

If we take $M=1$, $N=0$, then $2M+N+2=4$, $2M+1=3$ and $N+1=1$.
Let $\epsilon,\mu$ have period $\pi/2$ and
so $\ep_j=0$ for $j\not\in4\ZZ$, or say $\forall j$,
and we can allow to have up to $2M+1=3$ propagating harmonics.
One can minimize the Rayleigh quotient on the space $V=\{u\in\Honek:u_{4j-1}=u_{4j}=u_{4j+1}=0, \forall j\}$.
If we take $M=N=1$, then $2M+N+2=5$, $2M+1=3$, and $N+1=2$.
The parameters $\ep$ and $\mu$ have period $2\pi/5$ and can be
allowed to have up to $2M+1=3$ propagating harmonics.
We apply the min-max principle on the space $V=\{u\in\Honek:u_{5j+1}=u_{5j+2}=u_{5j+3}=u_{5j+4}=0, \forall j\}$.
The pseudo-period of the embedded guided mode is $2\pi$.

In our design, the wave number $\kappa$ can be nonzero and
 there exists a continuous embedded dispersion relation $\omega(\kappa)$. The guided
mode is robust with respect to $\kappa$.
It is also noticed that the modes are subject to the periodicity $\frac{2\pi}{2M+N+2}$.
If the material is perturbed in a say that destroys he small period
lose this periodicity while retaining the period
$2\pi$, the guided mode vanishes.

This design can also be understood as an existence proof of a guided mode with a larger
pseudo-periodicity.
If we assume the smallest period of the pillar is
$2\pi$, embedded guided modes with period $(2M+N+2)2\pi$ can exist.

\subsection{Nonexistence}
Nonexistence results for periodic slabs can be found in \cite{ShipmanVolkov2007}\cite{Bonnet-BeStarling1994}.
In \cite{ShipmanVolkov2007}, the nonexistence of guided modes in inverse structures is discussed.
Consider the piecewise constant material as in Theorem \ref{thm:EigenPillar}.
An inverse structure is a periodic structure with the material parameters $\epsilon_1,\mu_1$
less than the corresponding parameters $\ep_0,\mu_0$ in the exterior of the material.
The proof in \cite{ShipmanVolkov2007} requires that the slab satisfy a certain restriction condition.
The proof of the nonexistence includes introducing the subspace $X$ in which the propagating and linear
harmonics vanish then estimating the minimum of the Rayleigh quotient.
With the restriction of the slab width, it is shown that the Rayleigh quotient
is strictly bounded below by in inverse structures, and hence the weak problem has no solution in $X$.
We use an analogous restriction on the radius of the pillar in our proof, and whether
this restriction is necessary remains an open problem.

In \cite{Bonnet-BeStarling1994}, the assumption is on the parameters only.
It is assumed that there exists one plane parallel to the slab such
that the material parameters $\ep,\mu$ are nondecreasing in the direction
perpendicular to the slab.
In Theorem \ref{thm:Nonexistence2}, we present an analogous
condition that the material parameters be nondecreasing in the radial direction.
The proof involves an appropriate Rayleigh identity.

%


\begin{theorem}
(Nonexistence of guided modes)
Assume that in $\Omega$, $\ep_-<\ep\le\ep_0$ and $\mu_-<\mu\le\mu_0$.
Let the frequency $\omega$ and the wave number $\kappa$ be given
in the first Brillouin zone $[-\frac{1}{2},\frac{1}{2})$.
Suppose that the radius $R$ of the pillar satisfies
\begin{equation}  \label{cond:R_small}
R<\frac{1}{\sqrt{\epsilon_0\mu_0\omega^2-\kappa^2}}
\end{equation}
Then the periodic pillar does not admit any guided modes at the given frequency and wavenumber.
\end{theorem}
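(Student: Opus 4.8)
The plan is to argue by contradiction. Suppose a nontrivial guided mode $u\in\Honek$ exists at the given pair $\kw$. By the real characterization \eqref{eqn:WeakReal2}, setting $v=u$ yields the scalar identity $a_e^\omega(u,u)=\omega^2 b(u,u)$ together with the constraint that the propagating boundary harmonics vanish, $\widehat{(u|_{\Gamma_R})}_m=0$ for every $m\in\mathcal{Z}_p$. I would show that the hypotheses force $a_e^\omega(u,u)>\omega^2 b(u,u)$ for every nonzero $u$ obeying this constraint, which contradicts the identity. The first step is to exploit the inverse-structure inequalities $\mu\le\mu_0$ and $\epsilon\le\epsilon_0$ together with the nonnegativity of the evanescent part $T_e$ of the Dirichlet-to-Neumann map, which lets me drop the boundary term and pass to a free quadratic form,
\[
a_e^\omega(u,u)-\omega^2 b(u,u)\ \ge\ \frac{1}{\mu_0}\Big[\int_{\Omega_R}|\nabla u|^2-\epsilon_0\mu_0\omega^2\int_{\Omega_R}|u|^2\Big],
\]
so it suffices to prove the bracket is strictly positive whenever $u\neq0$.

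Next I would expand $u=\sum_{m,\ell}u_{m\ell}(r)e^{i\ell\theta}e^{i(m+\kappa)z}$. Orthogonality of the angular factors $e^{i\ell\theta}$ over $[0,2\pi]$ and of the pseudo-periodic factors $e^{i(m+\kappa)z}$ over $[-\pi,\pi]$ decouples both the Dirichlet energy and the mass into a sum over $(m,\ell)$ of one-dimensional radial integrals weighted by $r\,dr$, each carrying a centrifugal term $\ell^2/r^2$ and a constant shift $(m+\kappa)^2-\epsilon_0\mu_0\omega^2=-\eta_m^2$. For evanescent $m$ (where $\eta_m^2<0$) the shift is positive and the mode contributes positively with no constraint; for algebraic $m$ (where $\eta_m=0$) the per-mode integrand is $|u_{m\ell}'|^2+\tfrac{\ell^2}{r^2}|u_{m\ell}|^2\ge0$. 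The only modes threatening positivity are the propagating ones, which are precisely those pinned by the boundary constraint $u_{m\ell}(R)=0$.

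The crux is therefore a radial Poincaré inequality: for $f$ regular at $0$ with $f(R)=0$,
\[
\int_0^R\Big(|f'|^2+\frac{\ell^2}{r^2}|f|^2\Big)r\,dr\ \ge\ \frac{1}{R^2}\int_0^R|f|^2\,r\,dr,
\]
which I would establish by writing $f(r)=-\int_r^R f'(s)\,ds$ and applying a weighted Cauchy--Schwarz estimate (the $\ell\neq0$ term only helps, and the constant is generous). Applied to each propagating mode this bounds the radial energy below by $R^{-2}$ times the mass, and the hypothesis $R<(\epsilon_0\mu_0\omega^2-\kappa^2)^{-1/2}$ is exactly the statement $R^{-2}>\eta_0^2\ge\eta_m^2$ for every propagating $m$, the extreme case being $m=0$ since $|m+\kappa|$ is minimized at $m=0$ on $[-\tfrac12,\tfrac12)$. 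Hence each propagating mode contributes strictly positively as well, every mode contributes nonnegatively with strict positivity somewhere when $u\neq0$, the bracket above is positive, and the contradiction is complete.

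The main obstacle I anticipate is the Poincaré step: producing a clean constant that the radius condition can beat while correctly handling the weight $r\,dr$ and the singular centrifugal potential at $r=0$, and, in particular, treating the borderline algebraic case $\eta_m=0$ with $\ell=0$, where a constant radial profile contributes nothing to the free form. Here I would invoke that the radiation condition forces the exterior $\ell=0$ algebraic harmonic $c_{m1}+c_{m2}\ln|r|$ to vanish, so continuity of the guided mode gives $u_{m0}(R)=0$ and the profile cannot be a nonzero constant; moreover for generic $\omega$ one has $\mathcal{Z}_a=\varnothing$ and this case is vacuous. I expect the inverse-structure reduction and the mode decoupling to be routine, with all the real work concentrated in pinning down the radial eigenvalue estimate and its interaction with the geometric restriction on $R$.
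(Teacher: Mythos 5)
Your proposal is correct, and it reaches the conclusion by a genuinely different route than the paper. The paper converts the question into an auxiliary eigenvalue problem $a^\omega(u,v)=\alpha\omega^2 b(u,v)$ on the constrained subspace $X$, solves it \emph{exactly} for the homogeneous pillar $\ep=\ep_0$, $\mu=\mu_0$ by separation of variables, and runs a four-case analysis over the radial profiles: Dirichlet conditions on the propagating harmonics give $\alpha_{m\ell}=\bigl(j_\ell^2/R^2+(m+\kappa)^2\bigr)/(\ep_0\mu_0\omega^2)>1$ via the zeros $j_\ell$ of $J_\ell$ together with \eqref{cond:R_small}, while the evanescent and algebraic cases are excluded using $\gamma_{m\ell}>0$ and the absence of zeros of $I_\ell$; monotonicity of the Rayleigh quotient in $\ep,\mu$ then transfers $\alpha>1$ to the inverse structure. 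You instead estimate the constrained quadratic form directly: after discarding the nonnegative $T_e$ boundary term and invoking $\ep\le\ep_0$, $\mu\le\mu_0$, you decouple into radial Fourier modes, observe that only propagating $m$ carry a negative shift $-\eta_m^2$ with $\eta_m^2\le\eta_0^2=\ep_0\mu_0\omega^2-\kappa^2$, and beat it with an elementary weighted Poincar\'e inequality on the Dirichlet-pinned profiles --- note that your Cauchy--Schwarz argument actually yields the constant $4/R^2$, since $\int_0^R r\ln(R/r)\,dr=R^2/4$, comfortably above the $1/R^2$ that \eqref{cond:R_small} requires, whereas the paper's exact diagonalization gives the sharper $j_\ell^2/R^2$. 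What your route buys is the complete avoidance of the Robin-type Cases II--IV (no Hankel or modified-Bessel boundary analysis, only Dirichlet constraints on the propagating and algebraic $\ell=0$ traces, the latter correctly extracted from the radiation condition); what the paper's route buys is explicit spectral information $\{\alpha_{m\ell}\}$ for the comparison pillar. Two small points to nail down in a full write-up: the nonnegativity of $T_e$ is itself a Bessel fact (for $m\in\mathcal{Z}_e$ one has $\gamma_{m\ell}=|\eta_m|\,\bigl(-K_\ell'/K_\ell\bigr)(|\eta_m|R)>0$), so you have not escaped Bessel functions entirely, only their zero sets; and strict positivity needs you to rule out a nontrivial guided mode vanishing identically on $\Omega_R$, which follows because $H^1_\ell(\eta_m R)$ and $H^{1\prime}_\ell(\eta_m R)$ cannot vanish simultaneously (Wronskian), so zero Cauchy data on $\Gamma_R$ forces the exterior coefficients, and hence $u$, to vanish.
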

\begin{proof}
We restrict to the subspace $X\subset\Honek$ with
\[
X=\{u\in\Honek: \int_{\Gamma_R} u(x,y,z)e^{-i\ell\theta}e^{-i(m+\kappa)z}=0,
       \mbox{ if either } m\in\mathcal{Z}_p \mbox{, or } m\in\mathcal{Z}_a \mbox{ and } \ell=0\}
\]
The form $a^\omega(\cdot,\cdot)$ is conjugate symmetric in $X$, and the weak problem
\eqref{eqn:WeakReal1} is equivalent to
$a^\omega(u,v)-\omega^2b(u,v)=0 \mbox{ on } X$, as well as
$a^\omega(u,v)-\omega^2b(u,v)=0$ for all $v\in X^{\perp}$.
This gives rise to a finite number of extra conditions
$(\widehat{\partial_n u|_{\Gamma_R}})_{m\ell}=0, \forall m\in\mathcal{Z}_p\mbox{ or } m\in\mathcal{Z}_a,\ell=0$.

Consider the eigenvalue problem $a^\omega(u,v)-\alpha\omega^2 b(u,v)=0$ on $X$.
On $X$, $a^\omega(u,v)=a_e^\omega(u,v)$.
The problem of guided modes is solved by minimizing the quotient $\frac{a(u,u)}{b(u,u)}$
on $X$. 
Of course, the field $u$ should satisfy the following radiation condition:
\begin{equation}\label{cond:ExtraX}
\widehat{(\partial_n u|_{\Gamma_R})}_{m\ell}+\gamma_{m\ell}\widehat{(u|_{\Gamma_R})}_{m\ell}=0,
    \quad \forall m\in\mathcal{Z}_e \mbox{ or } m\in\mathcal{Z}_a \mbox{ and } \ell\neq0.
\end{equation}
We first let $\epsilon_1=\epsilon_0$, $\mu_1=\mu_0$.
The eigenfunctions satisfy a strong form
of the Helmholtz equation
\begin{equation}
\left\{
\begin{array}{l}
(\nabla+i\boldsymbol\kappa)^2\psi+\alpha\epsilon_0\mu_0\omega^2\psi=0 \mbox{ in } \Omega_R \\
\psi\in X , \quad T\psi+\partial_n\psi|_\Gamma=0 \\
\psi \mbox{ satisfies pediodic boundary conditions in } X.
\end{array}
\right.
\end{equation}
In $\Omega_R$, the separable solutions are in the form of
\begin{equation}
\begin{split}
&A_{m\ell}J_\ell(|\zeta_m| r)e^{i\ell\theta}e^{i(m+\kappa)z} \mbox{, if } \zeta_m^2>0, \\
&A_{m\ell}I_\ell(|\zeta_m| r)e^{i\ell\theta}e^{i(m+\kappa)z} \mbox{, if } \zeta_m^2<0,\\
&\left[C_{m1}+C_{m2}\ln|r|\right] e^{i\ell\theta}e^{i(m+\kappa)z}
   \mbox{, if } \zeta_m^2=0, \mbox{ and } \ell=0, \\
&\left[C_{m\ell1}|r|^\ell+C_{m\ell2}|r|^{-\ell}\right]e^{i\ell\theta}e^{i(m+\kappa)z}
   \mbox{, if } \zeta_m^2=0 \mbox{ and } \ell\not=0,
\end{split}
\end{equation}
where $\zeta_m^2=\alpha\ep_0\mu_0\omega^2-(m+\kappa)^2$.

We treat the cases for $m$ separately.

Case I: $ m\in\mathcal{Z}_p$, i.e. $\eta_m^2>0$. In this case,
the propagating harmonics should vanish, and $\widehat{(u|_{\Gamma_R})}_{m\ell}=0$.
If $\zeta_m^2>0$, and we assume $\zeta_m>0$, then
\[
J_\ell(|\zeta_m| R)=0,
\]
so $j_\ell=\zeta_m R=\sqrt{\alpha\epsilon_0\mu_0-(m+k)^2}R$,
where $j_\ell$ is a zero of $J_\ell(x)$.
The eigenvalues are given by
\[
\alpha=\frac{\frac{j_\ell^2}{R^2}+(m+k)^2}{\epsilon_0\mu_0\omega^2}
\]
The Bessel function $J_l(z)$ has a sequence of zeros, and the corresponding
$\alpha$ form a sequence of eigenvalues $\{\alpha_j\}_{j=1}^\infty$
with all possible $j_l$ and $m\in\mathbb{Z}$.
According to our assumption of the radius of the pillar, the eigenvalues
\[
\begin{split}
\alpha_{m\ell} &=\frac{1}{\epsilon_0\mu_\omega^2}\left[\frac{j_\ell^2}{R^2}+(m+\kappa)^2\right]\\
    &\ge\frac{1}{\epsilon_0\mu_0\omega^2}\left[\frac{j_\ell^2}{R^2}+\kappa^2\right] \\
    &\ge\frac{1}{\epsilon_0\mu_0\omega^2}\left[\frac{\ell^2}{R^2}+\kappa^2\right] \\
    &\ge\frac{1}{\epsilon_0\mu_0\omega^2}\left[\frac{1}{R^2}+\kappa^2\right] \\
    &> 1
\end{split}
\]
If $\zeta_m^2=0$, the pillar does not support such harmonics for $\ell=0$.
For $\ell\neq0$, the separable solution $C_{m\ell_1}r^\ell+C_{m\ell_2}r^{-\ell}$ should satisfy
$C_{m\ell_1}R^\ell+C_{m\ell_2}R^{-\ell}=0$ which is not possible.
If $\zeta_m^2<0$, we assume $\zeta_m=i|\zeta_m|$ and
\[
I_\ell(|\zeta_m|R)=0.
\]
It is not possible since the modified Bessel functions $I_\ell$ have no
zeros except at 0.

Case II: $m\in\mathcal{Z}_e$, i.e. $\eta_m^2<0$, and $\eta_m=i|\eta_m|$.
In this case, the conditions in \eqref{cond:ExtraX} for $m$ should be satisfied.
If $\zeta_m^2>0$, and we assume $\zeta_m>0$, then
\[
\frac{d}{dr}J_\ell(\zeta_mr)|_{r=R}=-\gamma_{ml}J_\ell(\zeta_m R)
\]
where $\gamma_{m\ell}=-\frac{\eta_m H_\ell^{1'}(\eta_mR)}{H_\ell^1(\eta_mR)}$.
The value of $R$ can be solved, and by comparing $\zeta_m^2$ and $\eta_m^2$, one knows that $\alpha>1$.
If $\zeta_m^2=0$, we also have $\alpha>1$.
If $\zeta_m^2<0$, we assume $\zeta_m=i|\zeta_m|$.
Then
\[
|\zeta_m| I_\ell'(|\zeta_m| R)+\gamma_{ml}I_\ell(|\zeta_m|R)=0.
\]
However we know that $I'_\ell(|\zeta_m| R)>0$, $\gamma_{m\ell}>0$ and $I_\ell(|\zeta_m| R)>0$,
and consequently the left hand side cannot be 0.

Case III: $\eta_m^2=0$ and $\ell\neq0$. The condition
$\widehat{(\partial_n u|_{\Gamma_R})}_{m\ell}+\gamma_{m\ell}\widehat{(u|_{\Gamma_R})}_{m\ell}=0$
should be satisfied.
If $\zeta_m^2>0$, then $\alpha>1$.
If $\zeta_m^2=0$, there is no solution.
If $\zeta_m^2<0$, $|\zeta_m| I_\ell'(|\zeta_m| R)+\gamma_{ml}I_\ell(\zeta_mR)=0$.
It is not possible.

Case IV: $\eta_m^2=0$ and $\ell=0$. The guided modes satisfy
$\widehat{(u|_{\Gamma_R})}_{m\ell}=0$.
If $\zeta_m^2>0$, then $\alpha>1$.
If $\zeta_m^2=0$, there is no solution.
If $\zeta_m^2<0$, $I_\ell(|\zeta|_m R)=0$.
It is not possible because the Bessel function $I_0$ has no zero.

In general, when $\epsilon_1=\epsilon_0$, any eigenvalue
$\alpha>1$.

In the case when
 $\mu_-<\mu\le\mu_0$ and $\ep_-<\epsilon\le\epsilon_0$,
 the quotient $\frac{a_r(u,u)}{b(u,u)}$ with coefficient
 $\ep,\mu$ is greater than or equal to the quotient with
quotient $\ep_0,\mu_0$.
Under our assumption of the size of the pillar, the number $\alpha>1$.
As a result, there exists no guided mode because the number $\alpha>1$ does not correspond
to a guided mode.
\end{proof}


\begin{theorem}\label{thm:Nonexistence2}
Assume there is a pair $x_0,y_0$ such that for all $-\pi\le z\le\pi$ and any
vector $\boldsymbol{r}_0=(r_{0x},r_{0y},0)$, the material parameters
$\epsilon, \mu$ are nondecreasing along the direction of $\boldsymbol{r_0}$,
that is, the weak directional derivatives $\nabla\epsilon\cdot\boldsymbol{r_0}$, and
$\nabla\mu\cdot\boldsymbol{r_0}$ are nonnegative.
Then there exists no guided mode.
\end{theorem}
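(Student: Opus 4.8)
The plan is to argue by contradiction using a Rellich-type identity generated by the radial multiplier, in direct analogy with Theorem 3.5 of Bonnet-Bendhia and Starling but adapted to the cylindrical geometry. Suppose a nontrivial guided mode $u$ exists at the given $(\kappa,\omega)$; by definition it solves the homogeneous Helmholtz equation on all of $\Omega$, lies in $H^1_\kappa(\Omega)$, and decays exponentially as $r\to\infty$ since only evanescent (and decaying algebraic) harmonics survive. Write $\boldsymbol{r}_0=(x-x_0,y-y_0,0)$ for the horizontal position vector based at the axis $(x_0,y_0)$, so the monotonicity hypothesis reads $\boldsymbol{r}_0\cdot\nabla\epsilon\ge0$ and $\boldsymbol{r}_0\cdot\nabla\mu\ge0$ a.e. I would combine two integral identities: the energy identity obtained by testing the weak form against $u$,
\[
\int_\Omega\frac{1}{\mu}|\nabla u|^2=\omega^2\int_\Omega\epsilon|u|^2,
\]
and a Rellich identity obtained by multiplying the equation by $\overline{\boldsymbol{r}_0\cdot\nabla u}$, taking twice the real part, and integrating over $\Omega$.

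The heart of the computation is the double integration by parts in the Rellich identity. Because $\boldsymbol{r}_0$ is horizontal, $\boldsymbol{r}_0\cdot\boldsymbol{n}=0$ on the faces $z=\pm\pi$, and the remaining face term $2\,\Re\int\frac{1}{\mu}(\partial_n u)(\boldsymbol{r}_0\cdot\nabla\bar u)$ over those faces cancels by $\kappa$-pseudo-periodicity; the contribution from $r\to\infty$ vanishes by the exponential decay of $u$ and $\nabla u$. What remains are bulk terms, and the decisive feature is that $\nabla\cdot\boldsymbol{r}_0=2$, which generates the terms $\int\frac{2}{\mu}|\nabla u|^2$ and $-2\omega^2\int\epsilon|u|^2$. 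After collecting everything the identity takes the form
\[
\int_\Omega\frac{2}{\mu}|\partial_z u|^2=\int_\Omega\frac{\boldsymbol{r}_0\cdot\nabla\mu}{\mu^2}|\nabla u|^2+2\omega^2\int_\Omega\epsilon|u|^2+\omega^2\int_\Omega(\boldsymbol{r}_0\cdot\nabla\epsilon)|u|^2.
\]
Substituting the energy identity to replace $2\omega^2\int\epsilon|u|^2$ by $2\int\frac1\mu|\nabla u|^2$ and moving it to the left collapses the identity, using $|\nabla u|^2-|\partial_z u|^2=|\nabla_h u|^2$, to
\[
-2\int_\Omega\frac{1}{\mu}|\nabla_h u|^2=\int_\Omega\frac{\boldsymbol{r}_0\cdot\nabla\mu}{\mu^2}|\nabla u|^2+\omega^2\int_\Omega(\boldsymbol{r}_0\cdot\nabla\epsilon)|u|^2,
\]
where $\nabla_h=(\partial_x,\partial_y,0)$.

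The sign analysis then finishes the argument: the left side is $\le0$, while by the monotonicity hypothesis together with $\mu>0$ and $\omega^2>0$ the right side is $\ge0$, so both sides vanish. In particular $\int_\Omega\frac1\mu|\nabla_h u|^2=0$, hence $\nabla_h u=0$ a.e. and $u$ depends on $z$ alone; but a $z$-only field cannot lie in $L^2$ over the unbounded cross-section (equivalently cannot decay as $r\to\infty$) unless $u\equiv0$, contradicting nontriviality. I expect the main obstacle to be rigor rather than the sign bookkeeping: the Rellich identity formally requires $u\in H^2_{\mathrm{loc}}$, and the factors $\nabla\mu,\nabla\epsilon$ must be read in the weak sense granted by the hypothesis. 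When $\epsilon,\mu$ jump across interfaces, $\nabla\mu$ and $\nabla\epsilon$ carry singular parts supported there, and one must verify that the transmission conditions ($u$ and $\mu^{-1}\partial_n u$ continuous) force the interface contributions either to cancel or to carry the same favorable sign as the monotonicity terms. This interface analysis, together with a careful justification of the decay-based vanishing of the terms at infinity via the evanescent Fourier--Bessel tail, is where the technical care will be concentrated.
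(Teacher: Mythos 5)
Your proposal is in substance the paper's own proof: the paper establishes this theorem by exactly the Rellich identity you describe, generated by the radial multiplier $r\,\partial u/\partial r$ (identical to your $\boldsymbol{r}_0\cdot\nabla u$ in polar form), with the $z=\pm\pi$ face terms cancelling by pseudo-periodicity and the monotonicity hypothesis supplying the sign of the bulk terms $r\,\partial_r\epsilon\,|u|^2$ and $-r\,\partial_r(\mu^{-1})\,|\nabla u|^2$. The execution differs in two respects worth noting. First, the paper works on the truncated domain $\Omega_R$, keeps the $r=R$ boundary terms (written via the Dirichlet-to-Neumann map), and sends $R\to\infty$, whereas you integrate over the full strip, substitute the energy identity $\int_\Omega\mu^{-1}|\nabla u|^2=\omega^2\int_\Omega\epsilon|u|^2$, and isolate $-2\int_\Omega\mu^{-1}|\nabla_h u|^2$; your endgame ($\nabla_h u=0$, so $u=u(z)$, which cannot decay in $r$ unless $u\equiv 0$) is cleaner and more explicit than the paper's, which asserts without elaboration that the left side ``vanishes if and only if $u=0$.'' Second, your Cartesian computation is actually the more careful one: since $\nabla(\partial_r u)=\partial_r(\nabla u)+r^{-2}u_\theta\boldsymbol\theta$, the paper's replacement of $2\,\Re\!\left[r\nabla u_r\cdot\mu^{-1}\nabla\bar u\right]$ by $\mu^{-1}r\,\partial_r|\nabla u|^2$ silently drops the nonnegative angular term $2\mu^{-1}r^{-2}|u_\theta|^2$; your $2\int\mu^{-1}|\nabla_h u|^2=2\int\mu^{-1}\left(|u_r|^2+r^{-2}|u_\theta|^2\right)$ is precisely the corrected total, so the sign analysis in both versions is sound.

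There is, however, one concrete gap in your treatment of the terms at infinity. You justify their vanishing by ``exponential decay of the evanescent Fourier--Bessel tail,'' but at the borderline frequencies where $\mathcal{Z}_a\neq\emptyset$, i.e.\ $\epsilon_0\mu_0\omega^2=(m+\kappa)^2$ for some integer $m$, the radiation condition admits algebraic harmonics $c\,r^{-|\ell|}e^{i\ell\theta}e^{i(m+\kappa)z}$, and for these the Rellich boundary terms at radius $R$ scale like $R^{-2|\ell|+2}$ --- for $\ell=\pm 1$ they do not decay at all. The paper addresses this case explicitly: the contributions of each such harmonic combine to $\omega^2\epsilon_0R^{-2|\ell|+2}-\mu_0^{-1}(m+\kappa)^2R^{-2|\ell|+2}$, which vanishes \emph{identically} (not asymptotically) precisely because $\eta_m=0$ means $\omega^2\epsilon_0=\mu_0^{-1}(m+\kappa)^2$. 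Alternatively, if your definition of a guided mode requires $u\in L^2(\Omega)$, the $\ell=\pm1$ harmonics are excluded outright and the surviving algebraic harmonics give boundary terms $O(R^{-2})$ --- but that argument must be made, since exponential decay alone does not cover $\mathcal{Z}_a$. Your remaining flagged concerns ($H^2_{\mathrm{loc}}$ regularity for the multiplier argument, and the singular parts of $\nabla\epsilon,\nabla\mu$ at material interfaces, where the hypothesis of nonnegative weak directional derivatives must be read distributionally) are legitimate, but they apply verbatim to the paper's proof as well, which manipulates $\partial_r\epsilon$ and $\partial_r(\mu^{-1})$ formally without any interface analysis.
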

\begin{proof}
Using polar coordinates, we observe that
\[
\begin{split}
\nabla\cdot(r\frac{\partial u}{\partial r}\mu^{-1}\nabla\bar u)=&
  \nabla(r u_r)\cdot\mu^{-1}\nabla\bar u+r u_r(\nabla\cdot\mu^{-1}\nabla \bar u)\\
  =&r u_r(\nabla\cdot\mu^{-1}\nabla\bar u)+u_r(\nabla r\cdot\mu^{-1}\nabla\bar u)+r\nabla u_r\cdot\mu^{-1}\nabla\bar u.
\end{split}
\]
We integrate this to obtain
\[
\begin{split}
\int_{\Gamma_R}ru_r\mu_0^{-1}\frac{\partial\bar u}{\partial n}
   &=\int_{\Omega_R}ru_r(\nabla\cdot\mu^{-1}\nabla\bar u)+\int_{\Omega_R}u_r(\nabla r\cdot\mu^{-1}\nabla\bar u)
      +\int_{\Omega_R}r\nabla u_r\cdot\mu^{-1}\nabla\bar u\\
   &=-\omega^2\int_{\Omega_R}r\epsilon u_r\bar u+\int_{\Omega_R} u_r(\nabla r\cdot\mu^{-1}\nabla\bar u)
      +\int_{\Omega_R}r\nabla u_r\cdot\mu^{-1}\nabla\bar u.
\end{split}
\]
Adding its complex conjugate, we have
\[
2\int_{\Gamma_R}\mu_0^{-1}R|\frac{\partial u}{\partial r}|^2=
   -\omega^2\int_{\Omega_R}\ep r\frac{\partial}{\partial r}|u|^2
   +\int_{\Omega_R}u_r(\nabla r \cdot\mu^{-1}\nabla\bar u)
   +\int_{\Omega_R}\bar u_r(\nabla r \cdot\mu^{-1}\nabla u)
   +\int_{\Omega_R}\mu_{-1}r\frac{\partial}{\partial r}|\nabla u|^2.
\]
We integrate by parts in $r$ for terms including $r\frac{\partial}{\partial r}$,
\[
\begin{split}
\int_{\Omega_R}\ep r \frac{\partial |u|^2}{\partial r}
   =&\int_0^{2\pi}\int_{-\pi}^\pi\int_0^R \ep r\frac{\partial |u|^2}{\partial r}rdrdzd\theta\\
   =&\int_0^{2\pi}\int_{-\pi}^\pi\int_0^R \epsilon r^2\frac{\partial|u|^2}{\partial r} drdzd\theta\\
   =&\int_0^{2\pi}\int_{-\pi}^\pi\ep r^2|u|^2|_0^Rdzd\theta\
       -\int_{\Omega_R}2\ep r|u|^2drdzd\theta
       -\int_{\Omega_R} r^2\frac{\partial\ep}{\partial r}|u|^2drdzd\theta\\
   =&\int_0^{2\pi}\int_{-\pi}^\pi\ep R^2|u|^2|_0^Rdzd\theta\
       -\int_{\Omega_R}2\ep |u|^2
       -\int_{\Omega_R} r\frac{\partial\ep}{\partial r}|u|^2, 
\end{split}
\]
and
\[
\int_{\Omega_R}\mu^{-1} r \frac{\partial |u|^2}{\partial r}
   =\int_0^{2\pi}\int_{-\pi}^\pi\mu^{-1} R^2|u|^2|_0^Rdzd\theta\
       -\int_{\Omega_R}2\mu^{-1} |u|^2
       -\int_{\Omega_R} r\frac{\partial\mu^{-1}}{\partial r}|u|^2.
\]
The previous identity becomes
\[
\begin{split}
2\int_{\Gamma_R}\mu_0^{-1}R|\frac{\partial u}{\partial r}|^2
   =&-\omega^2\left[\int_0^{2\pi}\int_{-\pi}^\pi R^2\epsilon_0|u(R)|^2dzd\theta
                   -\int_{\Omega_R}2\epsilon|\nabla u|^2-\int_{\Omega_R}r\frac{\partial\ep}{\partial r}|u|^2 \right]\\
    &+\int_{\Omega_R}u_r(\nabla r\cdot \mu^{-1}\nabla\bar u)+\int_{\Omega_R}\bar u_r(\nabla r\cdot\mu^{-1}\nabla u)\\
    &+\left[\int_0^{2\pi}\int_{-\pi}^\pi R^2\mu_0^{-1}|u(R)|^2dzd\theta
                   -\int_{\Omega_R}2\mu^{-1}|\nabla u|^2-\int_{\Omega_R}r\frac{\partial\mu^{-1}}{\partial r}|u|^2 \right]
\end{split}
\]
Since the field satisfies the Helmholtz equation, we can replace $-\int_{\Omega_R}\mu^{-1}|\nabla u|^2$
by $-\omega^2\int_{\Omega_R}\ep|u|^2+\mu_0^{-1}\int_{\Gamma_R}\bar u Tru$ to obtain
\[
\begin{split}
2\int_{\Gamma_R}\mu_0^{-1}R|\frac{\partial u}{\partial r}|^2
   =&\left[-\omega^2\int_0^{2\pi}\int_{-\pi}^\pi R^2\epsilon_0|u(R)|^2dzd\theta
                   +\omega^2\int_{\Omega_R}2\epsilon|\nabla u|^2
                   +\omega^2\int_{\Omega_R}r\frac{\partial\ep}{\partial r}|u|^2 \right]\\
    &+\int_{\Omega_R}u_r(\nabla r\cdot \mu^{-1}\nabla\bar u)+\int_{\Omega_R}\bar u_r(\nabla r\cdot\mu^{-1}\nabla u)\\
    &+\left[\int_0^{2\pi}\int_{-\pi}^\pi R^2\mu_0^{-1}|u(R)|^2dzd\theta
                   -2\omega^2\int_{\Omega_R}\ep|u|^2+2\mu_0^{-1}\int_{\Gamma_R}\bar u Tru
                   -\int_{\Omega_R}r\frac{\partial\mu^{-1}}{\partial r}|u|^2 \right],
\end{split}
\]
and so
\[
\begin{split}
2\int_{\Gamma_R}\mu_0^{-1}R|\frac{\partial u}{\partial r}|^2
   +&\omega^2\int_0^{2\pi}\int_{-\pi}^\pi R^2\epsilon_0|u(R)|^2dzd\theta
   -\int_0^{2\pi}\int_{-\pi}^\pi R^2\mu_0^{-1}|u(R)|^2dzd\theta\\
   =&\omega^2\int_{\Omega_R}r\frac{\partial\ep}{\partial r}|u|^2+\int_{\Omega_R}u_r(\nabla r\cdot \mu^{-1}\nabla\bar u)
    +\int_{\Omega_R}\bar u_r(\nabla r\cdot\mu^{-1}\nabla u)\\
    &-\int_{\Omega_R}r\frac{\partial\mu^{-1}}{\partial r}|u|^2+2\mu_0^{-1}\int_{\Gamma_R}\bar u Tru.
\end{split}
\]
In this identity,
\[
u_r(\nabla r\cdot\mu^{-1}\nabla\bar u)=\mu^{-1}u_r(\boldsymbol r\cdot\nabla\bar u)
=(u_r\boldsymbol r)\cdot\nabla\bar u \mu^{-1}
=\left|\frac{\partial u}{\partial r}\boldsymbol r\right|^2\mu^{-1},
\]
where
$\nabla u=\frac{\partial u}{\partial z}\boldsymbol z
     +\frac{\partial u}{\partial r}\boldsymbol r
     +\frac{1}{r}\frac{\partial u}{\partial\theta}\boldsymbol\theta$,
$\boldsymbol{r}=(\cos\theta,\sin\theta,0), \boldsymbol\theta=(-\sin\theta,\cos\theta,0),\boldsymbol z=(0,0,1)$,
and
$|\nabla u|^2=|u_r|^2+\frac{1}{r^2}|u_\theta|^2+|u_{z}|^2$.
Simplify it to obtain
\begin{equation} \label{idty:Unicity}
\begin{split}
\omega^2\int_{\Omega_R}r\frac{\partial\ep}{\partial r}|u|^2
  &+2\int_{\Omega_R}\mu^{-1}|\frac{\partial u}{\partial r}\boldsymbol r|^2
  -\int_{\Omega_R}r\frac{\partial \mu^{-1}}{\partial r}|\nabla u|^2
  +2\mu_0^{-1}\int_{\Gamma_R}\bar u Tru\\
  =&2\int_{\Gamma_R}\mu_0^{-1}R|\frac{\partial u}{\partial r}|^2
   +\omega^2\int_0^{2\pi}\int_{-\pi}^\pi R^2\epsilon_0|u(R)|^2dzd\theta
   -\int_0^{2\pi}\int_{-\pi}^\pi R^2\mu_0^{-1}|\nabla u(R)|^2dzd\theta
\end{split}
\end{equation}

The left-hand side of the identity \eqref{idty:Unicity} is nonnegative by our condition
on the material parameters, and it vanishes if and only if $\|u\|_{\Honek}=0$.
If we assume $u$ is a guided mode, and $u$ has the expansion
\[
u(r,\theta,z)=\sum_{m\in \mathcal{Z}_e}\sum_\ell a_{m\ell} H_\ell^1(\eta_m r) e^{i\ell\theta}e^{i(m+k)z}
   +\sum_{m\in \mathcal{Z}_a}\sum_{\ell\neq0} cr^{-|\ell|}e^{i\ell\theta}e^{i(m+k)z},
\]
then the terms with $m\in\mathcal{Z}_a$ of the right hand side of \eqref{idty:Unicity}  is a sum of multiples of
\[
\omega^2\epsilon_0R^{-2|\ell|+2}
   -\mu_0^{-1}(m+\kappa)^2R^{-2|\ell|+2}=0.
\]
%

Since $H_\ell^{1}(\eta_mR)$ and $H_\ell^{1'}(\eta_mR)$ are exponentially decaying
as $R\rightarrow\infty$,
in this limit, the limit of the right hand side is $0$.
On the other hand, the left hand side does not converge to $0$ if $u\ne0$.
 So $u=0$.
\end{proof}

\bibliographystyle{plain}
\bibliography{Tu_Pillar}

\end{document}